\setlist[enumerate,1]{label=\arabic*.,ref=\arabic*.}
\algrenewcommand\algorithmicwhile{\textbf{While}}
\algrenewcommand\algorithmicfor{\textbf{For}}
\algrenewcommand\algorithmicdo{\textbf{Do}}
\algrenewcommand\algorithmicif{\textbf{If}}
\algrenewcommand\algorithmicthen{\textbf{Then}}
\algrenewcommand\algorithmicelse{\textbf{Else}}
\algrenewcommand\algorithmicend{\textbf{End}}
\algrenewcommand\algorithmicreturn{\textbf{Return}}
\newtheorem{theorem}{Theorem}
\newtheorem{lemma}{Lemma}
\newtheorem{corollary}{Corollary}
\newtheorem{proposition}{Proposition}
\theoremstyle{definition}
\newtheorem{definition}{Definition}
\theoremstyle{remark}
\newtheorem{remark}{Remark}
\newtheorem{example}{Example}
\title[Kempe's Universality Theorem for Rational Space Curves]{Kempe's Universality Theorem\\for Rational Space Curves}
\date{\today}
\author{Zijia Li}
\address[Zijia Li]{Joanneum Research, Institute for Robotics and Mechatronics, Lakeside B08a, 9020 Klagenfurt, Austria, 
  Phone +43 316 876 2016}
\urladdr{https://www.joanneum.at/nc/en/get-to-know-us/employees/detail/staff/Li-Zijia/}
\email{zijia.li@joanneum.at}
\author{Josef Schicho}
\address[Josef Schicho]{Research Institute for Symbolic Computation,
  Johannes Kepler University Linz, Schloss Hagenberg, 4232 Hagenberg, Austria}
\urladdr{http://www.risc.jku.at/people/jschicho/}
\email{josef.schicho@risc.jku.at} 
\author{Hans-Peter Schröcker}
\address[Hans-Peter Schröcker]{Unit Geometry and CAD, University of Innsbruck, Technikerstr.~13, 6020 Innsbruck, Austria}
\urladdr{http://geometrie.uibk.ac.at/schroecker/}
\email{hans-peter.schroecker@uibk.ac.at}
\keywords{}
\subjclass[2010]{Primary 70B05; Secondary 13F20, 65D17, 68U07}
\newcommand{\C}{\mathbb{C}}
\newcommand{\D}{\mathbb{D}}
\renewcommand{\H}{\mathbb{H}}
\renewcommand{\P}{\mathbb{P}}
\newcommand{\R}{\mathbb{R}}
\newcommand{\Q}{\mathbb{Q}}
\renewcommand{\DH}{\D\H}
\newcommand{\MDH}{\mathbb{S}}
\newcommand{\qi}{\mathbf{i}}
\newcommand{\qj}{\mathbf{j}}
\newcommand{\qk}{\mathbf{k}}
\newcommand{\eps}{\varepsilon}
\newcommand{\cj}[1]{\overline{#1}}
\newcommand{\Norm}[1]{N(#1)}
\newcommand{\SE}[1][3]{\mathrm{SE}(#1)}
\newcommand{\SQ}{S}
\newcommand{\axis}[1]{\ell(#1)}
\newcommand{\lplus}[2]{\ensuremath{\text{concat}(#1, #2)}}
\newcommand{\lmin}[2]{\ensuremath{\text{remove}(#1, #2)}}
\newcommand{\ffont}[1]{#1}
\DeclareMathOperator{\mrpf}{\ffont{mrpf}}
\DeclareMathOperator{\lcoeff}{\ffont{lcoeff}}
\DeclareMathOperator{\rrem}{\ffont{rrem}}
\DeclareMathOperator{\rQR}{\ffont{rQR}}
\DeclareMathOperator{\quo}{\ffont{quo}}
\DeclareMathOperator{\lquo}{\ffont{lquo}}
\DeclareMathOperator{\rquo}{\ffont{rquo}}
\DeclareMathOperator{\lgcd}{\ffont{lgcd}}
\DeclareMathOperator{\rgcd}{\ffont{rgcd}}
\DeclareMathOperator{\minmot}{\ffont{minmot}}
\DeclareMathOperator{\tfactor}{\ffont{tfactor}}
\DeclareMathOperator{\gfactor}{\ffont{gfactor}}
\DeclareMathOperator{\czero}{\ffont{czero}}
\DeclareMathOperator{\bflip}{\ffont{bflip}}
\DeclareMathOperator{\minpol}{\ffont{mp}}
\begin{document}

\begin{abstract}
  We prove that every bounded rational space curve of degree $d$ and circularity
  $c$ can be drawn by a linkage with $ \frac{9}{2} d-6c+1$ revolute joints. Our
  proof is based on two ingredients. The first one is the factorization theory
  of motion polynomials. The second one is the construction of a motion
  polynomial of minimum degree with given orbit. Our proof also gives the
  explicit construction of the linkage.
\end{abstract}

\maketitle

\section{Introduction}
\label{sec:introduction}

Kempe's Universality Theorem \cite{kempe76} is one of the great
theorems of theoretical mechanism science (``beautiful''
\cite{blaschke56,demaine07}, ``surprising'' \cite{abbott08,demaine07},
``incredible theoretical significance'' \cite{saxena11}, ``shocking''
\cite{gao01}). It states that any bounded portion of a planar
algebraic curve can be traced out by one joint of a planar linkage
with revolute joints. Discovered only shortly after the invention of
the first straight line linkages, Kempe's theorem must have been a
true surprise to his contemporary kinematicians. Throughout the 20th
century, it was considered a milestone result.

Kempe's constructive proof can be used to actually compute a linkage
that draws a planar algebraic curve. However, it was clear from the
beginning that this construction is of no practical relevance.  It
requires an excessive number of links and joints, even for curves of
low degree. Nowadays, an asymptotic bound of $O(d^n)$ for the number
of links necessary to draw an algebraic curve of degree $d$ in an
ambient space of dimension $n$ is known \cite{abbott08}. Nonetheless,
drawing an ellipse with a Kempe linkage already requires hundreds of
links \cite{kobel08}. A wealth of more practical examples for algebraic curve
generation can be found in the monograph
\cite{artobolevskii64}. However, the constructions there are rather
specific to certain classes of curves and, more importantly, use
mechanical constraints different from rigid links and revolute joints.

Kempe's Theorem talks about algebraic curves and it is natural to ask
for simplifications in case of rational curves. This was done recently
in \cite{gallet15} where the authors constructed scissor-like linkages
to draw rational planar curves. Their construction is based on the
factorization of certain polynomials over a non-commutative ring that
describe the motion of one of the links. The upper bounds on the
number of links and joints for curves of degree $d$ reduce
dramatically to $3d+2$ and $\frac{9}{2}d+1$, respectively. In this article, we
extend the ideas of \cite{gallet15} to rational space curves. Our aim
is to provide a construction that works for all rational space curves
and, at the same time, to reduce the number of links and joints as far
as possible. For this purpose, we introduce several new ideas that
also improve the planar case.

We use rational motions of minimal degree in the dual quaternion model
of rigid body displacements for the link that draws the given space
curve. This acknowledges the importance of circularity. If the
rational space curve is entirely circular, the motion degree is
particularly low \cite{li15b,li15a} and factorization of the motion
polynomial is straightforward, without the need for prior degree
elevation. In the non-circular case, a degree elevation is necessary
but in contrast to \cite{gallet15,li15b}, we only preserve one
relevant trajectory, not the complete rational motion. This allows to
keep the degree lower and saves links and joints.  If the curve is of
degree $d$ and circularity $c$, the bounds for links and joints are
$3d-4c+2$ and $\frac{9}{2}d-6c+1$, respectively.

Another advantage of our spatial approach concerns a certain defect in
Kempe's original construction that later was even considered a flaw
\cite[Section~3.2]{demaine07}. Kempe used parallelogram and
anti-parallelogram linkages as basic building blocks of his
linkages. It is well known that the configuration curves of these
linkages consists of two irreducible algebraic components that
intersect at flat folded positions. That is, the linkage may switch
between parallelogram and anti-parallelogram mode, thus entering
unwanted components of the configuration curve. The effect is that
Kempe's linkages draw more than the originally intended curve. At the
cost of introducing additional links and joints, this defect can be
overcome by the ``bracing constructions'' of
\cite{kapovich02,demaine07}. The approach of \cite{gallet15} uses
anti-parallelograms and is therefore subject to the same defect
and its resolution.

The basic building blocks of our approach are Bennett linkages whose
configuration curve has only one irreducible component. Thus, no
additional links and joints are needed to prevent the linkage from
switching modes. We also capture this in the phrase ``the
configuration curve is free of spurious components''.

This article uses several results from other, recently published,
papers. Whenever we use such a result we give a concise summary and
references. Moreover, we provide algorithmic descriptions so that a
reader of this paper will be able to construct linkages for drawing an
arbitrary rational space curve. We continue this article with a
formulation of the main theorem and an overview of its proof in
\autoref{sec:overview}. In \autoref{sec:preliminaries} we provide a
concise introduction to dual quaternions and motion polynomials. The
proof of our main theorem is done in \autoref{sec:construction}. It is
subdivided into several steps: Construction of a minimal motion to the
given rational curve, factorization of this motion, and subsequent
linkage construction by means of ``Bennett flips''. In
\autoref{sec:examples} we present several examples to illustrate
important points of our construction. In the concluding
\autoref{sec:discussion} we discuss our result and ideas. We mentioned
implementation issues and outline possible extension and application.

\section{Main theorem and overview of proof }
\label{sec:overview}

Our main result in this paper is a statement about linkages and
bounded rational curves in three-space. A rational curve is a curve
admitting a parametric equation of the shape
$X = x_0^{-1}(x_1,x_2,x_3)$ with polynomials
$x_0,x_1,x_2,x_3 \in \R[t]$ and $x_0 \neq 0$. It is no restriction to
assume that this parametric equation is \emph{reduced,} that is,
$\gcd(x_0,x_1,x_2,x_3) = 1$ because we may always divide by a common
factor. The \emph{degree} of the rational curve is the maximum of the
polynomials $x_0$, $x_1$, $x_2$, and $x_3$ in reduced form. The
\emph{circularity} of a reduced rational parametric equation is
$c \coloneqq \frac{1}{2}\deg\gcd(x_0,x_1^2+x_2^2+x_3^2)$. It counts
the number of intersection points with the absolute conic of Euclidean
geometry, is a positive integer and invariant with respect to rational
re-parameterizations and similarity transformations. Finally, the
rational curve is called \emph{bounded,} if $x_0$ has no real zeros
and $\deg(x_0)\ge\deg(x_1),\deg(x_2),\deg(x_3)$. Note that in this
sense bounded segments of unbounded rational curves are bounded
rational curves.

A linkage in our context consists of a set of lines in space, called
the \emph{joints}, and links (rigid bodies which could have different
shapes) that connect two or more axes. Whenever two links are
connected by a joint, their relative position is constrained to a
rotation about their common joint. More precisely, the relative
position is determined by the rotation angle about this joint with
respect to a given reference configuration. The set of all tuples of
possible rotation angles is called the linkage's \emph{configuration
  space.} If it is of dimension one, we call it a \emph{configuration
  curve} and say the linkage has \emph{one degree of freedom.} If this
is the case, we designate one link as fixed and another as moving. We
view the relative displacements of the moving link with respect to the
fixed link as a curve in $\SE$. The \emph{orbit} or \emph{trajectory}
of a point attached to the moving link is the locus of all positions
in space when the point is subject to all possible displacements in
this curve. A linkage is called \emph{spherical,} if all axes are
concurrent and \emph{planar} if they are parallel. The trajectories of
spherical linkages are spherical curves and the trajectories of planar
linkage are curves in parallel planes. For a more formal definition of
(at least planar) linkages we refer to \cite{gallet15}.

Let us illustrate some of these concept at hand of a Bennett linkage
\cite{bennett03,bennett14,baker98,perez04} which will play a crucial
role in our linkage construction in \autoref{sec:scissor-linkage}. A
Bennett linkage is a spatial four-bar linkage with one degree of
freedom. Its four axes $\ell_1$, $\ell_2$, $\ell_3$, $\ell_4$ are the
perpendiculars to two incoming edges in the vertices of a spatial
parallelogram, that is, a spatial quadrilateral with equal opposite
edge lengths. \autoref{fig:bennett} displays an abstract
representation and a 3D model of a Bennett linkage. The joint axes
$\ell_1$ and $\ell_4$ are attached to the fixed link while $\ell_2$
and $\ell_3$ belong to the moving link. Two things about Bennett
linkages are important to us:
\begin{itemize}
\item Bennett linkage constitute the only type of spatial four-bar
  linkages with one degree of freedom and
\item their configuration curve consists of only one component.
\end{itemize}

\begin{figure}
  \centering
  \begin{overpic}[width=\textwidth]{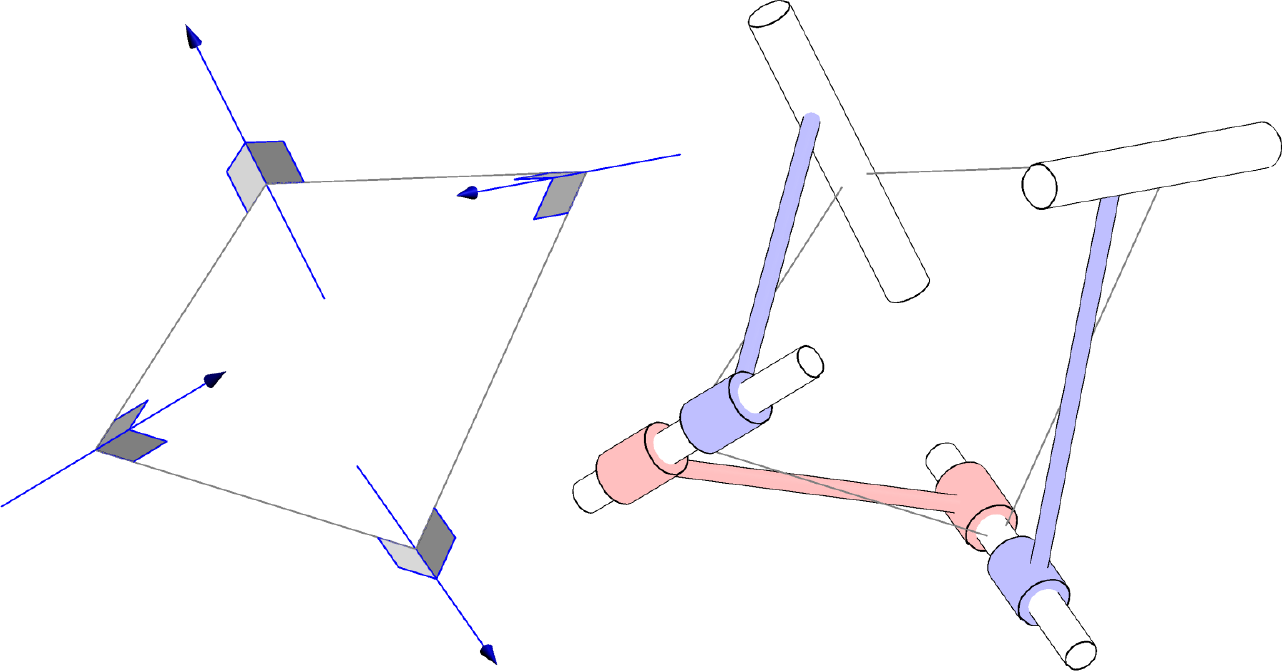}
    \put(14,45){$\ell_1$}
    \put(16,19){$\ell_2$}
    \put(38,3){$\ell_3$}
    \put(49,36){$\ell_4$}
    \put(58,45){$\ell_1$}
    \put(61,19){$\ell_2$}
    \put(85,3){$\ell_3$}
    \put(94,36){$\ell_4$}
  \end{overpic}
  \caption{Bennett linkage}
  \label{fig:bennett}
\end{figure}

Here is our main result:

\begin{theorem}
  \label{th:1}
  For every bounded rational curve of degree $d$ and circularity $c$
  in three space, there exists a spatial revolute linkage with one
  degree of freedom with at most $3d-4c+2$ links and
  $\frac{9}{2}d-6c+1$ joints such that the trajectory of one point
  attached to the moving link is precisely the given rational curve.
\end{theorem}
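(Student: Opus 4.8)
The plan is to realize the given rational curve $X = x_0^{-1}(x_1,x_2,x_3)$ as the trajectory of a point under a rational motion $C(t)$ in $\SE$, encoded by a motion polynomial in the dual quaternions, and then to factor that motion polynomial into linear factors, each corresponding to a revolute joint. First I would construct a motion polynomial $C \in \DH[t]$ of \emph{minimum degree} whose orbit of a suitably chosen point is exactly $X$; here the key input is the construction of minimal motion polynomials with prescribed orbit (cited from \cite{li15b,li15a}). The degree of $C$ is governed by $d$ and $c$: a fully circular curve needs a motion of low degree, while the non-circular part forces a degree elevation. The crucial saving compared to \cite{gallet15,li15b} is that I only need to preserve \emph{one} trajectory, not the full motion, so the elevation can be done more economically — this bookkeeping of the degree in terms of $d$ and $c$ is where the numbers $3d-4c+2$ and $\tfrac{9}{2}d-6c+1$ will come from.

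Next I would invoke the factorization theory of motion polynomials: after ensuring $C$ has a suitable primal (real) part with no real polynomial factor, $C$ factors into a product of linear motion polynomials $C = (t-h_1)(t-h_2)\cdots(t-h_n)$, each $t-h_i$ describing a rotation about a fixed axis. Two such linear factorizations (obtained by reordering the real roots of the norm polynomial) share the same orbit, and the transition between consecutive factorizations that swap two adjacent factors is realized geometrically by a closed four-bar linkage. Because I am working in the spatial dual-quaternion setting, these four-bars are Bennett linkages, whose configuration curve is irreducible — this is precisely the point that avoids the parallelogram/anti-parallelogram defect of Kempe's and \cite{gallet15}'s constructions.

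The linkage is then assembled by a sequence of ``Bennett flips'': starting from the open chain given by one factorization of $C$, I repeatedly attach Bennett linkages that connect one factorization to a reordered one, building up a network of joints whose common functional constraint forces the designated moving link to traverse exactly the motion $C$, and hence the designated point to trace exactly $X$. Counting the links and joints contributed by the initial open chain together with those added by each Bennett flip, and substituting the degree bound for $\deg C$ established in the first step, yields the claimed totals $3d-4c+2$ links and $\tfrac{9}{2}d-6c+1$ joints. Finally I would check boundedness: since $x_0$ has no real zeros and the minimal motion is chosen with norm polynomial having no real roots, the whole configuration curve is a real, bounded, one-dimensional component, so the linkage genuinely has one degree of freedom and draws the complete curve.

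The main obstacle I expect is the first step — producing a minimal-degree motion polynomial with a \emph{single} prescribed orbit and controlling its degree sharply in terms of $d$ and $c$ — together with guaranteeing that this polynomial can be put into a form admitting a linear factorization (nontrivial primal part, no real factors) without spoiling the degree count. The Bennett-flip assembly and the bookkeeping are then comparatively mechanical, but must be done carefully to confirm that no spurious components enter the configuration curve and that the joint and link counts match the stated bounds exactly.
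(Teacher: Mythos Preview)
Your high-level plan matches the paper's: construct the minimal-degree motion polynomial $C$ for the trajectory (degree $d-c$, spherical degree defect $d-2c$), right-multiply by a quaternion polynomial $H$ of degree $\tfrac12(d-2c)$ that preserves the orbit of the origin while making $CH$ factorizable, then factor $CH=(t-h_1)\cdots(t-h_n)$ with $n=\tfrac32 d-2c$ and assemble a linkage. The counts $2(n+1)$ and $3n+1$ then give exactly the stated bounds.

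The gap is in the assembly step. You propose to connect ``one factorization to a reordered one'' via Bennett four-bars built from adjacent swaps. This is precisely the approach the paper \emph{rejects}, for two reasons. First, when the norm polynomial $CH\,\cj{CH}$ has a repeated irreducible quadratic factor---as in the Viviani motion, where $C\cj{C}=(1+t^2)^2$---there is only \emph{one} factorization, so there is nothing to reorder and your construction produces no closed loops at all. Second, even when several factorizations exist, proving that the combined multi-loop linkage has exactly one degree of freedom in \emph{every} case (not just generically) is delicate, and spurious components of the configuration space do occur.

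The paper's remedy is a different and essential idea: take a \emph{single} factorization $(t-h_1)\cdots(t-h_n)$, pick a fresh rotation quaternion $m_0$ with $\minpol(m_0)\neq\minpol(h_i)$ for all $i$ and with axis in general position, and recursively set $(m_\ell,k_\ell)\coloneqq\bflip(h_\ell,m_{\ell-1})$ for $\ell=1,\dots,n$. Each step produces a Bennett four-bar $(m_{\ell-1},h_\ell,m_\ell,k_\ell)$ whose axes are guaranteed distinct by the choice of $m_0$ (a Zariski-open condition, since the maps $m_0\mapsto m_\ell$ are birational), so each loop has exactly one degree of freedom and the full scissor linkage inherits this. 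The extra generic $m_0$ is what replaces the unavailable ``second factorization'' and is the missing ingredient in your sketch; without it the argument does not go through for curves whose minimal motion has a norm with repeated factors.
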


Before we embark on a proof of this theorem (which will consume large
parts of this paper), we would like to make a few remarks:
\begin{itemize}
\item The theorem talks about \emph{bounded} rational curves. This is
  necessary because all trajectories of a linkage with only revolute
  joints are bounded. It is, however, admissible that the rational
  curve parameterizes a bounded portion of an unbounded curve, for
  example a line segment.
\item Boundedness also implies that $d$ is even and $c$ is an integer
  so that our bounds on the number of links and joints are integers as
  well.
\item The linkage of \autoref{th:1} is not unique. In fact,
  uncountably many linkages exist. Even planar or spherical curves may
  be drawn by spatial linkages.
\item However, our approach is capable of producing planar linkages
  for planar curves (\autoref{cor:2}) and \autoref{th:1} retains the
  bounds of \cite{gallet15} for the numbers of links and joints of
  bounded planar rational curves.
\item Our approach can also produce spherical linkages for spherical
  curves. For them, the number of links and joints reduces to $d+2$
  and $\frac{3}{2}d+1$, respectively (\autoref{cor:spherical}).
\item The linkages can be constructed in such a way, that their
  configuration curve is free of spurious components. This is however
  not the case if the linkage is required to be planar or spherical.
\end{itemize}

Our proof of \autoref{th:1} is constructive and can be translated into
algorithms, mostly based on polynomial algebra over the dual
quaternions. Our presentation pays attention to these algorithmic approach
and gives ample information for actual implementation. The proof
consists of several steps. We begin by constructing a rational motion
such that one point has the given rational curve as trajectory. In
order to keep small the number of links and joints, it is advantageous
to require a minimal motion degree in the dual quaternion model of
rigid body displacements. In \cite{li15a}, we proved that these
properties determine a unique rational motion which may be computed in
rather straightforward manner by \autoref{alg:minmot}.

This rational motion is parameterized by a certain polynomial with
dual quaternion coefficients which we call a ``motion
polynomial''. General motion polynomials can be written, in several
ways, as products of linear motion polynomials \cite{hegedus13}. These
factorizations correspond to the decomposition of the rational motion
into products of rotations. The axes are determined by the linear
motion polynomials and the rotation angles are linked by a common
parameter. On these open chains of revolute joints we base our linkage
construction. Unfortunately, there are motion polynomials that do not
allow factorizations. Even worse, the motion polynomials of minimal
degree with prescribed trajectory typically fall into this
category. Therefore, we have to artificially raise their degree in
such a way that factorizations exist. This can be done by multiplying
them with real polynomials as in \citep{gallet15,li15b}. This does not
change the underlying motion and may be advantageous in certain
applied situations. However, the bounds in \autoref{th:1} are only
obtained by a refinement of this procedure. We right-multiplying the
motion polynomial with a certain quaternion polynomial. This
changes the motion but not the trajectory in question. A curious
side-effect of this approach is that it may turn a planar or spherical
motion into a spatial motion.

Having constructed a factorizable motion polynomial to the prescribed
trajectory, we have to construct a linkage. In general, it is possible
to combine the open chains obtained from different factorizations to
form a linkage with one degree of freedom \cite{hegedus13, li13a, li14ck, li14ark}. 
 We do, however, not pursue
this approach because it seems difficult to prove that the resulting
linkage has \emph{always} (not just in general) only a single degree
of freedom. Moreover, spurious motion components do exist
 \cite{li13a, li14ck}. Instead, we adapt the scissor-linkage construction of
\cite{gallet15} to the spatial case, replacing the anti-parallelograms
of planar linkages by spatial Bennett linkages. This automatically
guarantees that the configuration space is of dimension one and has no
spurious components.

It should also be mentioned that all algorithms presented here require
exact (zero error) computation, which means symbolic methods. These are
not possible for $\R$, but only for suitable real closed subfields, in
particular for the set of real algebraic numbers. Real closure is necessary
because we will need to factor a univariate polynomial into its
irreducible linear and quadratic factors. In the examples, we
try to remain in subfields for which arithmetic does not get too complicated,
such as $\Q$ or real quadratic extensions.

\section{Dual quaternions and kinematics}
\label{sec:preliminaries}

This section provides an introduction to dual quaternions and their
relation to space kinematics. In particular, we introduce a
homomorphism between a certain subgroup of dual quaternions into $\SE$
and the important concept of motion polynomials.

Denote by $\H$ the non-commutative ring of quaternions. An element
$h \in \H$ can be written as $h = h_0 + h_1\qi + h_2\qj + h_3\qk$.
The quaternion units $\qi$, $\qj$, and $\qk$ satisfy the
multiplication rules
\begin{equation*}
  \qi^2 = \qj^2 = \qk^2 = \qi\qj\qk = -1.
\end{equation*}
By $\D$ we denote the ring $\R[\eps]/\langle \eps^2 \rangle$. Its
elements are called \emph{dual numbers.} The scalar extension
$\DH \coloneqq \D \otimes_\R \H$ of $\H$ by $\D$ gives the ring of
dual quaternions.

A dual quaternion $h$ may be written as $h = p + \eps q$ with
quaternions $p$ and $q$, the \emph{primal} and \emph{dual part} of
$h$, respectively. The conjugate dual quaternion is
$\cj{h} = \cj{p} + \eps\cj{q}$ and quaternions in $\H$ are conjugated
by multiplying the coefficients of $\qi$, $\qj$, and $\qk$ with
$-1$. The dual quaternions with non-zero primal part are
invertible. The inverse of $h$ is $h^{-1} = \Norm{h}^{-1}\cj{h}$. Here,
$\Norm{h} \coloneqq h\cj{h} = p\cj{p} + \eps(p\cj{q} + q\cj{p})$
denotes the \emph{dual quaternion norm.} It is a dual number and,
provided $a \neq 0$, the inverse of number $a + \eps b \in \D$ is
$a^{-1} - \eps ba^{-2}$.

Denote by $\MDH$ the multiplicative subgroup of dual quaternions with
real, nonzero norm. It acts on $\R^3 = \langle \qi, \qj, \qk \rangle$
according to
\begin{equation}
  \label{eq:1}
  z \mapsto \frac{pz\cj{p}+p\cj{q}-q\cj{p}}{\Norm{p}}.
\end{equation}
This equation defines a homomorphism from $\MDH$ to $\SE$. It is
surjective and the kernel is the real multiplicative group $\R^\star$.
Hence, there exists an isomorphism between $\MDH/\R^\star$ and $\SE$.
This is actually Study's well-known kinematic mapping (or its
inverse). Factorizing by $\R^\star$ turns $\DH$ into real projective
space $\P^7$ and $\MDH$ becomes the Study quadric $\SQ \subset \P^7$
minus the exceptional three-space of classes of dual quaternions with
vanishing primal part. More details can be found for example in
\cite{husty09:_algebraic_geometry_kinematics}.

Now we make this group homomorphism parametric. Denote by $\DH[t]$ the
skew ring of polynomials over $\DH$ with indeterminate $t$. We define
multiplication in this ring by the convention that $t$ commutes with
all coefficients. This is a natural convention because $t$ will later
act as a real motion parameter and $\R$ is in the center of
$\DH$. Some notions that have already been defined for $\DH$ can be
transferred to $\DH[t]$. For $C \in \DH[t]$ the \emph{conjugate
  polynomial} $\cj{C}$ is obtained by conjugating all coefficients of
$C$. If $C = P + \eps Q$ with $P, Q \in \H[t]$, then $P$ and $Q$ are
called \emph{primal} and \emph{dual part,} respectively. The norm
polynomial is
$\Norm{C} \coloneqq C\cj{C} = P\cj{P} + \eps(P\cj{Q} + Q\cj{P})$. Its
coefficients are dual numbers. If $C = \sum_{i=0}^n c_it^i$, the value
of $C$ at $h \in \DH$ is defined as
$C(h) \coloneqq \sum_{i=0}^n c_ih^i$. With these definitions,
evaluation of polynomials at a fixed value $h \in \DH$ is not a ring
homomorphism. Still, the dual quaternion zeros of polynomials over
$\DH$ have a meaning in our algorithms.
 
\begin{definition}
  The polynomial $C \in \DH[t]$ is called a \emph{motion polynomial,}
  if $C\cj{C} \in \R[t] \setminus \{0\}$ and if its leading
  coefficient $\lcoeff(C)$ is invertible.
\end{definition}

Motion polynomials are a central concept in this article. Their
introduction is motivated by the fact that for every $t_0 \in \R$, the
value $C(t_0)$ is an element of $\MDH$ so that $C$ acts on
$\R^3$. Varying $t_0$, we get a one-parametric set of rigid body
displacements, that is, a motion. By virtue of \eqref{eq:1}, the orbit
of any point is (part of) a rational curve, whence the motion itself
is called \emph{rational.} It is possible to extend the parameter range
from $\R$ to $\R \cup \{\infty\}$: with the definition 
$C(\infty) \coloneqq \lcoeff(C)$, the parametrization of the orbit
is continuous except in the points $t_1$ such that $C(t_1)$ has norm zero.
One could also represent the map as a regular map from the real projective
line to real projective space $\P^7$, but this would require a second
homogeneous variable for the parameter $t$, which complicates the
algebraic theory; hence, we prefer to use an affine parameter space
and a projective image.

Of particular importance to us are linear motion polynomials. The
linear polynomial $C = t - h$ is a motion polynomial if
$C\cj{C} = t^2 - (h + \cj{h})t + h\cj{h}$ is real. This is the case if
both $h + \cj{h}$ and $h\cj{h}$ are real. It is well-known that the
motion parameterized by $C$ is either a rotation about a fixed axis
or, if the primal part of $h$ is real, a translation in fixed
direction. In either case, the parameter value $t = \infty$
corresponds to the identity transformation, that is, zero rotation
angle or translation distance.

In this article, we often assume that a given motion polynomial has no
nontrivial real factor. We call those motion polynomials
\emph{reduced.} From a kinematic viewpoint, this is no restriction as
$C$ and $CR$ with $R \in \R[t] \setminus \{0\}$ parameterize the same
motion. Note, however, that multiplication with a real polynomial is a
useful technique to ensure existence of a factorization (see
\cite{gallet15,li15b} and \autoref{sec:factorization}).

\section{Proof of main theorem, linkage construction}
\label{sec:construction}

In this section we prove \autoref{th:1}. For each step in our
constructive proof we provide a theoretical justification, often with
references to existing literature, and an algorithmic
description. Given is a bounded rational curve by a reduced rational
parametric equation $X = x_0^{-1}(x_1,x_2,x_3)$. We also encode it as
polynomial $x = x_0 + x_1\qi +x_2\qj + x_3\qk \in \H[t]$.  The
parametric equation $X$ and the polynomial $x$ are of the same
degree~$d$.

\subsection{Motion of minimal degree to given trajectory}
\label{sec:minimal-degree}

The first step is the construction of a rational motion such that the
trajectory of one point equals the parametric curve $X$. An obvious
and simple choice is the translation along $X$, given by the motion
polynomial $C = x_0 - \frac{1}{2}\eps (x_1\qi +x_2\qj + x_3\qk)$. However, in order to keep low
the number of links and joints, we try to find a motion polynomial $C$
of minimal degree with trajectory $X$. The main result in this context
states uniqueness of this motion and characterizes curves for which
the trivial translation along the curve is not optimal \cite{li15a}.
In order to fully appreciate it, we need a definition:

\begin{definition}
  \label{def:trajectory-degree}
  The \emph{trajectory degree} of a reduced rational motion
  $C = P + \eps Q \in \DH[t]$ is the maximal degree of its
  trajectories. The \emph{quaternion degree} is the degree of $C$ as
  polynomial in $\DH[t]$. The \emph{spherical degree defect} of $C$ is
  the degree of the real polynomial factor of maximal degree of the
  primal part $P$.
\end{definition}

Some authors refer to the trajectory degree of a reduced rational
motion as just the ``degree''. However, we have to distinguish between
this trajectory degree and the degree of $C$ as polynomial in
$\DH[t]$. The latter was called the motion's ``quaternion degree'' in
\cite{juettler93} and we follow this convention. The spherical degree
defect accounts for a difference in the respective trajectory degrees
of $C$ and its spherical motion component $P$. If the motion
polynomial is monic (or at least its leading coefficient is
invertible), the spherical degree defect can be computed as degree of
$\mrpf P \coloneqq \gcd(P,\cj{P})$ where $\gcd$ denotes the monic real
polynomial factor of maximal degree.

\begin{theorem}[\cite{li15a}]
  \label{th:2}
  The rational motion of minimal quaternion degree with a prescribed
  rational trajectory is \emph{unique.} If the trajectory is of degree
  $d$ and circularity $c$, this minimal motion is of degree $d-c$ and
  has a spherical degree defect of $s = d - 2c$.
\end{theorem}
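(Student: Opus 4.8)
The plan is to translate the assertion into a divisibility problem for the primal part of the motion polynomial, to settle uniqueness from the ideal structure of $\H[t]$, and to read off the two numerical invariants from the way the circularity splits the denominator $x_0$. To begin, reduce to the primal part. Write a candidate motion polynomial as $C=P+\eps Q$ with $P,Q\in\H[t]$; the identity $P\cj Q+Q\cj P=0$ built into the definition of a motion polynomial rewrites the orbit of the origin as $2P\cj Q/\Norm P$. Right multiplication of $C$ by a constant translation moves the prescribed point to the origin without changing the quaternion degree, so I may assume the prescribed point is the origin. Then $C$ has $X$ as a trajectory exactly when $2x_0P\cj Q=\Norm P\,(x-x_0)$; solving gives $\cj Q=\cj P\,(x-x_0)/(2x_0)$. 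Hence a motion polynomial with trajectory $X$ exists if and only if $P\in\H[t]$ has invertible leading coefficient and satisfies $x_0\mid xP$ (equivalently $x_0\mid(x-x_0)P$) and $x_0\mid\Norm P$, and then $Q$ is forced. So the theorem is a statement about the polynomial $P$ of least degree with these two divisibilities.

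For uniqueness, observe that $\mathcal I:=\{P\in\H[t]:x_0\mid xP\}$ is closed under addition and right multiplication, i.e.\ a right ideal; since $\H[t]$ admits left division, $\mathcal I=d\,\H[t]$ for a monic $d$. Every admissible $P$ lies in $\mathcal I$; writing $P=dh$, the second divisibility $x_0\mid\Norm d\,\Norm h$ becomes $x_0'\mid\Norm h$ with $x_0':=x_0/\gcd(x_0,\Norm d)$, and $x_0'$, dividing the root-free polynomial $x_0$, is itself root-free and hence of the form $\Norm{h_0}$ with $\deg h_0=\tfrac12\deg x_0'$. In the typical situation $x_0'=1$, so the minimal $P$ is $d$ up to a unit; in general $d\,h_0$ is unique up to right multiplication by a constant (which changes neither the quaternion degree nor the origin's orbit), and this is the sense in which the minimal motion is unique.

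To get the degree and the spherical defect, split $x_0=u\,w$ with $w:=\gcd\!\big(x_0,\,x_1^2+x_2^2+x_3^2\big)$ of degree $2c$ and $u$ of degree $d-2c$. At a zero $\alpha$ of the ``non-circular'' factor $u$ the quaternion $(x-x_0)(\alpha)$ has nonzero norm $(x_1^2+x_2^2+x_3^2)(\alpha)$, hence is invertible, so $x_0\mid(x-x_0)P$ forces $P(\alpha)=0$; over all zeros of $u$ this gives $u\mid P$, so $d=u\,d_1$ with $d_1$ generating $\{P':w\mid(x-x_0)P'\}$. At a zero $\beta$ of the ``circular'' factor $w$ the quaternion $(x-x_0)(\beta)$ is a nonzero element of norm zero, and $w\mid(x-x_0)P'$ puts $P'(\beta)$ into its (two-dimensional) left annihilator; in particular $\Norm{P'(\beta)}=0$, so $w\mid\Norm{P'}$ comes for free and forces $\deg d_1\ge c$, while a dimension count produces a nonzero solution of degree $\le c$. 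Hence $\deg P=\deg d=(d-2c)+c=d-c$. Minimality also forbids $d_1$ a nontrivial real polynomial factor (it could be cancelled, lowering the degree), so $\mrpf P=\mrpf(u\,d_1)=u$, i.e.\ the spherical degree defect is $\deg u=d-2c=s$.

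The hard part is the tightness of the degree bound, i.e.\ actually producing $d_1$ of degree exactly $c$. This rests on the elementary fact that every real polynomial without real zeros equals $g\cj g$ for a quaternion polynomial $g$ of half the degree (each monic real irreducible quadratic does, and $\Norm{\,\cdot\,}$ is multiplicative), together with a careful treatment of the zero-divisor conditions at the circular zeros $\beta$, showing they admit a common solution of the predicted degree with no spurious real factor; the bookkeeping when $x_0$ is not squarefree — higher-order contact with the plane at infinity or the absolute conic, so that ``evaluation at a zero'' must be replaced by ``reduction modulo a power of a quadratic'' — is what makes the argument of \cite{li15a} technical.
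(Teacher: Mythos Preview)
Your reduction to a condition on the primal part $P$ and the use of the principal right-ideal structure of $\H[t]$ is the right idea and is essentially what \cite{li15a} does (the paper itself only quotes the result and the algorithm): the explicit construction there takes $P = w\cdot\lgcd(D,g)$, and one checks that $P':=\lgcd(D,g)$ is precisely the monic generator of your right ideal $\{P_1:g\mid DP_1\}$ for the circular part (indeed $DP'=P'Q'P'=-g\cj{Q'}$ since $\cj{Q'}\cj{P'}=-P'Q'$ and $\cj{P'}P'=g$). So the two approaches coincide; yours is phrased abstractly, the paper's constructively.

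There is, however, a genuine gap in your uniqueness paragraph. You impose \emph{two} conditions on $P$, namely $x_0\mid xP$ and $x_0\mid\Norm P$, treat them as independent, and then handle the possibility $x_0':=x_0/\gcd(x_0,\Norm d)\neq 1$ by asserting that $dh_0$ is ``unique up to right multiplication by a constant''. That assertion is false: two monic quaternion polynomials $h_0,h_0'$ of degree $\tfrac12\deg x_0'$ with $\Norm{h_0}=\Norm{h_0'}=x_0'$ are in general \emph{not} unit multiples of one another, and the corresponding motion polynomials $dh_0+\eps Q$ and $dh_0'+\eps Q'$ parametrise genuinely different motions (same orbit of the origin, different orbits elsewhere). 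What rescues the argument is that your second condition is not independent of the first. From $x_0\mid\cj P D$ you get, after defining $Q$, that $2P\cj Q=\Norm P\cdot D/x_0$ is a polynomial, hence $x_0\mid\Norm P\cdot x_i$ for $i=1,2,3$; since $x$ is reduced this forces $x_0\mid\Norm P$. Thus $x_0'=1$ always, the minimal admissible $P$ is the ideal generator $d$ itself, and uniqueness up to a unit of $\H$ (which is exactly the paper's monic normalisation) follows immediately from principality of~$\mathcal I$. Once you make this correction, your argument is complete in the squarefree case, and your deferral of the repeated-factor bookkeeping to \cite{li15a} is appropriate.
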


Lets call the motion of \autoref{th:2} the trajectory's \emph{minimal
  motion.} \autoref{th:2} tells us that in general (if the trajectory
is of circularity zero), the translation along the trajectory is
minimal. But for curves with positive circularity, we can do better.
This has effects on our linkage construction: Curves of high
circularity lead to minimal motions of low degree and low spherical
degree defect and lend themselves well to a realization by a linkage
with few links and joints.

The constructive proof of \autoref{th:2} in \cite{li15a} can be turned
into a short algorithm to actually compute minimal motions. We make a
few technical assumptions:
\begin{itemize}
\item The reduced rational curve $x = x_0+x_1\qi + x_2\qj + x_3\qk$
  satisfies $x(\infty) \coloneqq \lim_{t \to \infty} t^{-d}x = 1$.
  This can always be accomplished by a suitable translation of the
  coordinate frame.
\item We want to find a motion polynomial $C = P + \eps Q$ of minimal
  degree $d - c$ such that $P\cj{P} + 2P\cj{Q} = x$. In view of
  \eqref{eq:1} this means that $x$ is the trajectory of the affine
  origin in~$\R^3$.
\item We assume that $C$ is monic which is consistent with the
  assumption $x(\infty) = 1$ and entails that $C(\infty)$ is the
  identity. This can always be accomplished by a suitable rotation of
  the coordinate frame about its origin.
\end{itemize}

An important ingredient in the computation of minimal motions is right
division of quaternion polynomials. Given $F$, $G \in \H[t]$, there
exist unique polynomials $Q, R \in \H[t]$, called right quotient and
right remainder, with $F = GQ + R$ and $\deg R < \deg G$. In case of
monic $G$, they can be computed by \autoref{alg:rQR}. We denote the
right quotient by $Q = \rquo(F, G)$ and the right remainder by
$R = \rrem(F, G)$. The latter is used in \autoref{alg:lgcd} (Euclidean
algorithm) for computing the left gcd of two quaternion polynomials
$F, G \in \H[t]$ in case of monic $G$. The left gcd is the unique
monic polynomial $L = \lgcd(F,G)$ of maximal degree such that there
exist polynomials $Q, R \in \H[t]$ with $F = LQ$ and $G = LR$. The
function $\lcoeff$ in Line~\ref{line:lcoeff} of \autoref{alg:lgcd}
returns the leading coefficient of a polynomial so that
$\lcoeff(R)^{-1}R$ is monic.

\begin{algorithm}
  \caption{$\rQR(F, G)$ (quotient and remainder of polynomial right division)}
  \label{alg:rQR}
  \begin{algorithmic}[1]
    \Require Two polynomials $F, G \in \H[t]$, $G$ is monic.
    \Ensure Polynomials $Q, R \in \H[t]$ such that $\deg R < \deg G$
    and $F = GQ + R$.
    \State $Q \leftarrow 0$, $R \leftarrow F$
    \State $m \leftarrow \deg F$, $n \leftarrow \deg G$
    \While{$m \ge n$}
    \State $c \leftarrow \lcoeff(R)$ \Comment leading coefficient of $R$
    \State $Q \leftarrow Q + ct^{m-n}$
    \State \label{line:Gct}$R \leftarrow R - Gct^{m-n}$
    \State $m \leftarrow \deg R$
    \EndWhile
    \State \Return $Q, R$
  \end{algorithmic}
\end{algorithm}

\begin{algorithm}
  \caption{$\lgcd(F, G)$ (left gcd of quaternion polynomials)}
  \label{alg:lgcd}
  \begin{algorithmic}[1]
    \Require Two polynomials $F$, $G \in \H[t]$, $G$ is monic.
    \Ensure Monic polynomial $L \in \H[t]$ such that 
    there exist polynomial $Q, R \in \H[t]$ with $F = LQ$ and $G = LR$.
    \State $R \leftarrow \rrem(F, G)$
    \If{$R = 0$}
    \State \Return $G$
    \EndIf
    \State \label{line:lcoeff}\Return $\lgcd(G, R \lcoeff(R)^{-1})$
  \end{algorithmic}
\end{algorithm}

The computation of the minimal degree rational motion with the
prescribed trajectory $x$ is illustrated in \autoref{alg:minmot}. It
computes a monic rational motion polynomial $C$ such that the
trajectory of the affine origin is parameterized by $x$. The
correctness of \autoref{alg:minmot} has been proved in
\cite{li15a}. The function $\quo$ in Line~\ref{line:quo} denotes
the quotient of polynomial division for real polynomials. It may
also be computed by \autoref{alg:rQR}.

\begin{algorithm}
  \caption{$\minmot(x)$ (minimal degree rational motion; \cite{li15a})}
  \label{alg:minmot}
  \begin{algorithmic}[1]
    \Require Reduced rational parametric equation
    $x = x_0 + x_1\qi + x_2\qj + x_3\qk$ with $x(\infty) = 1$.
    \Ensure Monic motion polynomial $C$ of minimal degree such that
    $x = P\cj{P} + 2P\cj{Q}$ (trajectory of affine origin is
    parameterized by $x$).
    \State $g \leftarrow \gcd(x_0, x_1^2+x_2^2+x_3^2)$
    \State \label{line:quo}$w \leftarrow \quo(x_0,g)$ \Comment
    quotient of polynomial division in $\R[t]$ \State
    $D \leftarrow x_1\qi + x_2\qj + x_3\qk$
    \State $P' \leftarrow \lgcd(D,g)$, $Q' \leftarrow \rquo(D,P')$,
    \State \Return $C = wP' + \frac{1}{2} \eps \cj{Q'}$
  \end{algorithmic}
\end{algorithm}

\begin{lemma}
  \label{lem:1}
  Let $C=P+\eps Q$ be a minimal motion to the reduced rational curve
  $x = x_0 + x_1\qi + x_2\qj + x_3\qk$. Then $\mrpf(P)$ and $Q\cj{Q}$
  are relatively prime.
\end{lemma}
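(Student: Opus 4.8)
Throughout I will read everything off the explicit minimal motion produced by \autoref{alg:minmot}. Writing $D = x_1\qi + x_2\qj + x_3\qk$, that algorithm returns $C = P + \eps Q$ with $P = wP'$ and $Q = \tfrac12\cj{Q'}$, where $g = \gcd(x_0,x_1^2+x_2^2+x_3^2)$, $w = \quo(x_0,g)$, $P' = \lgcd(D,g)$ and $Q' = \rquo(D,P')$. In particular $x_0 = wg$, and since $P'$ left‑divides $D$ the remainder is zero, so $D = P'Q'$ and hence $\Norm{D} = x_1^2+x_2^2+x_3^2 = \Norm{P'}\,\Norm{Q'}$ (the quaternion norm is central in $\DH[t]$). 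The plan is to identify the two quantities $\mrpf(P)$ and $Q\cj Q$ of the statement explicitly in terms of $x_0$, $g$ and $\Norm{D}$, and then to finish with an elementary gcd computation.

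For the first quantity: any real polynomial factor of $P' = \lgcd(D,g)$ divides both $D$ and $g$, hence divides $\gcd(x_1,x_2,x_3)$ and $g$; because $g \mid x_0$ and the parameterization is reduced, $\gcd\!\bigl(\gcd(x_1,x_2,x_3),\,g\bigr)$ divides $\gcd(x_0,x_1,x_2,x_3)=1$, so $\mrpf(P')=1$. As $w$ is real, this gives $\mrpf(P) = \mrpf(wP') = w = x_0/g$, up to the harmless scalar normalization built into \autoref{alg:minmot}. For the second quantity: since $\cj{Q'}Q' = Q'\cj{Q'} = \Norm{Q'}$ is a real polynomial, $Q\cj Q = \tfrac14\,\cj{Q'}\,\cj{\cj{Q'}} = \tfrac14\,\Norm{Q'} = \tfrac14\,\Norm{D}/\Norm{P'}$.

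The heart of the argument is the identity $\Norm{P'} = g$. One inclusion is easy: writing $g = P'R$ gives $g^2 = \Norm{g} = \Norm{P'}\Norm{R}$, so $\Norm{P'} \mid g^2$, and also $\Norm{P'} \mid \Norm{D}$ from $D = P'Q'$. For the reverse divisibility $g \mid \Norm{P'}$ I would pass to $\H\otimes_\R\C \cong M_2(\C)$, where $D$ and $g$ become a $2\times 2$ matrix polynomial $\widetilde D$ and $g\cdot\mathrm{Id}$; the greatest common left divisor of these has determinant equal to the gcd of all $2\times 2$ minors of the stacked $4\times 2$ matrix $\bigl(\widetilde D;\,g\cdot\mathrm{Id}\bigr)$, which yields
\[
  \Norm{P'} = \gcd\!\bigl(\Norm{D},\ g^2,\ g\cdot\gcd(x_1,x_2,x_3)\bigr).
\]
Since $g \mid \Norm{D}$ and $\gcd\!\bigl(g,\gcd(x_1,x_2,x_3)\bigr)=1$ by the previous paragraph, the right‑hand side collapses to $g$. (Alternatively, $\mrpf(P')=1$ and $\Norm{P'}=g$ can be imported directly from the analysis of the minimal motion in \cite{li15a}, where $P'$ is precisely the degree‑$c$ spherical component of $C$.)

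Putting the pieces together,
\[
  \gcd\!\bigl(\mrpf(P),\,Q\cj Q\bigr)
    = \gcd\!\Bigl(\tfrac{x_0}{g},\ \tfrac14\cdot\tfrac{x_1^2+x_2^2+x_3^2}{g}\Bigr)
    = \gcd\!\Bigl(\tfrac{x_0}{g},\ \tfrac{x_1^2+x_2^2+x_3^2}{g}\Bigr) = 1,
\]
the last equality because $g = \gcd(x_0,x_1^2+x_2^2+x_3^2)$. I expect the \emph{main obstacle} to be exactly the third paragraph: Steps concerning $\mrpf(P)$, $Q\cj Q$ and the final gcd are routine bookkeeping with norms and gcd's, but establishing that the left gcd of the purely vectorial polynomial $D$ with the real polynomial $g$ has norm exactly $g$ is the one point where the non‑commutativity of $\H[t]$ has to be handled carefully—either via the $M_2(\C)$ minor computation sketched above, or by appeal to the structure theory of \cite{li15a}.
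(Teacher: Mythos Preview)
Your proposal is correct and follows essentially the same route as the paper's proof: both read off $P=wP'$ and $Q=\tfrac12\cj{Q'}$ from \autoref{alg:minmot}, argue $\mrpf(P)=x_0/g$ from reducedness of $x$, identify $Q\cj{Q}$ with $-\tfrac14(x_1^2+x_2^2+x_3^2)/g$ via $\Norm{P'}=g$, and conclude by the definition of $g$. The only cosmetic difference is at the step $\Norm{P'}=g$, where the paper simply invokes \cite[Lemma~3]{li15a} (your second alternative) rather than the $M_2(\C)$ minor computation you sketch.
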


\begin{proof}
  Assume that the trajectory of the origin is
  $x_0+x_1\qi+x_2\qj+x_3\qk$. Following Algorithm~\ref{alg:minmot}, we
  define $g \coloneqq \gcd(x_0,x_1^2+x_2^2+x_3^2)$ and
  $D \coloneqq x_1\qi + x_2\qj + x_3\qk$. Then
  \begin{equation*}
    P = wP' = \frac{x_0}{g}\lgcd(D, g).
  \end{equation*}
  Because $x$ is reduced, $\gcd(x_1\qi + x_2\qj + x_3\qk,g) = 1$ and
  $\mrpf(P)=x_0/g$. Moreover, Lemma~3 of \cite{li15a} is applicable
  (with $C = D$ and $R = g$) and guarantees that
  $L \coloneqq \lgcd(D, g)$ satisfies $L\cj{L} = g$. There exists
  $D' \in \H[t]$ such that $D = LD'$. But then
  $-D = \cj{D} = \cj{D'}\,\cj{L}$ which implies $\cj{L} = \rgcd(D,g)$
  and also $\cj{Q} = \frac{1}{2}\lquo(D,\cj{L}) = \frac{1}{2}D'$.
  Thus,
  \begin{equation*}
    -4Q\cj{Q} = D'\cj{D'} = D\cj{D}(L\cj{L})^{-1} = D\cj{D}/g
  \end{equation*}
  and $\mrpf(P)=x_0/g$ and $Q\cj{Q}=-(x_1^2+x_2^2+x_3^2)/(4g)$ are
  relatively prime.
\end{proof}

\begin{remark}
  If the rational curve is planar or spherical, then the minimal
  motion is also planar or spherical, respectively. If a planar curve
  with a non-planar minimal motion existed, we could reflect the
  motion in the curve's plane and obtain a contradiction to
  uniqueness. For spherical curves, this follows from Folgerung~9 and
  the proof of Satz~6 of \cite{juettler93}.
\end{remark}

Let us look at some examples of minimal motions:

\begin{example}
  \label{ex:1}
  Consider the parametric curve $x = t^2 + 1 - 2 a\qi -2 b\qj t$ with
  $a, b \in \R$. For $a > b > 0$, it is an ellipse and the minimal
  motion polynomial is $C = t^2 + 1 + \eps(a\qi + b\qj t)$. It
  parameterizes the translation along the ellipse. If $a > b = 0$, the
  curve degenerates to a straight line segment and the minimal motion
  is the translation $C = t^2 + 1 + a \eps\qi$ (back and forth along
  this segment). If $a = b > 0$, the parametric curve is a circle. Its
  circularity is one and the minimal motion polynomial
  $C = t - \qk + a\eps\qj$ is linear, as predicted by \autoref{th:1}.
  It parameterizes the rotation around the circle axis.
\end{example}

\begin{example}
  \label{ex:2}
  As a second example, consider Viviani's curve $x = x_0 + x_1\qi +
  x_2\qj + x_3\qk$, given by
  \begin{equation*}
    x_0 = (1+t^2)^2,\quad
    x_1 = -4t^2,\quad
    x_2 = 2t(1-t^2),\quad
    x_3 = 2t(1+t^2).
  \end{equation*}
  It lies on the sphere of radius $1$ with center $(-1,0,0)$ and our
  parameterization satisfies $x(\infty) = 1$. The minimal motion is
  $C = t^2 - (\qj + \qk - \eps(\qj - \qk))t - \qi$. It is a spherical
  motion because it fixes the sphere center $(-1,0,0)$. We call it
  \emph{Viviani motion.}
\end{example}

\subsection{Factorization of the bounded minimal degree motion}
\label{sec:factorization}

Having constructed a monic motion polynomial $C$ with a trajectory $x$, we have
converted our trajectory generation problem to a motion generation problem: We
are looking for a linkage with one degree of freedom such that one link follows
the motion parameterized by $C$. (Note that we are only interested in bounded
motions, that is, motions with only bounded trajectories.) This we accomplish by
decomposing the motion into the product of rotations about certain axes. The
rotation angles are linked by the common motion parameter $t$. The basic tool
for this is the factorization theory for motion polynomials as introduced in
\cite{hegedus13} and its extension to non-generic bounded motion polynomials
in~\cite{li15b}.

\begin{definition}
  \label{def:factorization}
  A bounded motion polynomial $C$ is said to \emph{admit a
    factorization} if there exist bounded linear motion polynomials
  $t-h_1$, \ldots, $t-h_n$ such that
  \begin{equation}
    \label{eq:2}
    C = (t-h_1) \cdots (t-h_n).
  \end{equation}
\end{definition}

We already said that each linear factor $t - h_i$ in \eqref{eq:2}
parameterizes a rotation with fixed axes or translation in fixed
direction. Because $t - h_i$ is bounded, the later cannot occur
here. Hence, the product \eqref{eq:2} parameterizes the composition of
such rotations. Note that a factorization of the form \eqref{eq:2}
need not exist and if it exists, it need not be unique.

Call a motion polynomial ``generic'' if its spherical degree defect
(\autoref{def:trajectory-degree}) is zero.  A generic motion
polynomial can always be factored \cite[Theorem~1]{hegedus13} but the
factorization is in general not unique. Elliptic and circular translation
and the Viviani motion serve as examples:

\begin{example}
  \label{ex:3}
  The elliptic translation of \autoref{ex:1} does not admit a
  factorization if $a \neq b$. If $a = b > 0$, it admits infinitely
  many factorizations
  \begin{equation*}
    C = t^2 + 1 + a\eps(\qi + \qj t)
      = (t - \qk + \eps(\alpha\qi + (\beta+a)\qj))
        (t + \qk - \eps(\alpha\qi + \beta\qj))
  \end{equation*}
  with $\alpha$, $\beta \in \R$. The Viviani motion in \autoref{ex:2}
  admits only the factorization
  \begin{equation}
    \label{eq:3}
    C = t^2 - (\qj + \qk - \eps(\qj + \qk))t - \qi
      = (t - \qk + \eps\qj)(t - \qj - \eps\qk).
  \end{equation}
  By considering primal parts only (the dual parts are merely there
  because the sphere center is not the affine origin), we see that it
  is the composition of rotations with equal angular speed about the
  second and the third coordinate axis.
\end{example}

An algorithm for computing factorizations of generic motion
polynomials has been presented in \cite{hegedus13}. It is displayed in
\autoref{alg:czero} and \autoref{alg:gfactor}. The assumptions on $M$
and $C$ in \autoref{alg:czero} guarantee that the remainder $R$ in
Line~1 has an invertible leading coefficient (compare
\cite[Theorem~3]{hegedus13}). They are met in \autoref{alg:gfactor}
because $C$ is assumed to be generic. The non-uniqueness of the
factorization comes from the undetermined order of the quadratic
factors in Line~\ref{line:quadfac} of \autoref{alg:gfactor}. In
Line~\ref{line:lquo} we compute the quotient of polynomial left
division. This can be done by a variant of \autoref{alg:rQR} but with
$R-Gct^{m-n}$ in Line~\ref{line:Gct} replaced by $R-ct^{m-n}G$.

\begin{algorithm}
  \caption{$\czero(C,M)$ (common zero of $C$ and quadratic factor $M$
    of $C\cj{C}$)}
  \label{alg:czero}
  \begin{algorithmic}[1]
    \Require Monic, bounded motion polynomial $C \in \D\H[t]$,
    quadratic factor $M$ of $C\cj{C}$ that does not divide the primal
    part of~$C$.
    \Ensure Bounded linear motion polynomial $t - h$ such that
    $C(h) = M(h) = 0$.
    \State $R \leftarrow \rrem(C,M)$
    \Comment $R = at + b$ with $a,b \in \D\H$.
    \State \label{li:zero}$h \leftarrow $ unique zero of $R$ \Comment
    $h = -a^{-1}b$ ($a$ is invertible) \State \Return $h$.
  \end{algorithmic}
\end{algorithm}

\begin{algorithm}
  \caption{$\gfactor(C)$ (factorization of generic motion polynomials; \cite{hegedus13})}
  \label{alg:gfactor}
  \begin{algorithmic}[1]
    \Require $C = P + \eps Q \in \D\H[t]$, a generic, monic motion polynomial.
    \Ensure A list $L=[t-h_1,\ldots,t-h_n]$ of bounded linear motion
    polynomials such that $C=(t-h_1) \cdots (t-h_n)$.
    \State $L \leftarrow [\,]$
    \Comment initialize empty list
    \State \label{line:quadfac}$F \leftarrow [M_1,\ldots,M_n]$
    \Comment Each $M_i \in \R[t]$, $i\in\{1,\ldots,n\}$ is a\\
    \hfill quadratic, irreducible factor of $C\cj{C} \in \R[t]$.
    \For{$i = 1$ to $n$}
    \State $F \leftarrow \lmin{F}{M_i}$.
    \Comment Remove $M_i$ from list $F$.
    \State \label{line:czero}$h \leftarrow \czero(C, M_i)$
    \State $L \leftarrow \lplus{[t-h]}{L}$
    \Comment Add $t-h$ to start of list $L$.
    \State \label{line:lquo}$C \leftarrow \lquo(C,t-h)$
    \Comment Quotient of left division, variant of \autoref{alg:rQR}.
    \EndFor
    \State \Return $L$.
  \end{algorithmic}
\end{algorithm}

\begin{example}
  \label{ex:4}
  \autoref{alg:gfactor} cannot be used to factor the elliptic or
  circular translation of Examples~\ref{ex:1} and \ref{ex:3} because
  its primal part $t^2+1$ is real. It fails in Line~\ref{li:zero} of
  \autoref{alg:czero} where the dual quaternion $a$ is not
  invertible. We can, however, use \autoref{alg:gfactor} to obtain the
  factorization \eqref{eq:3} of the Viviani motion. Because of
  $C\cj{C} = (1+t^2)^2$ the quadratic factors of $C\cj{C}$ are all
  equal ($M_1 = M_2 = 1+t^2$) and the output of \autoref{alg:gfactor}
  is, indeed, unique.
\end{example}

Our linkage construction requires existence of at least one
factorization of the motion polynomial $C$. Thus, we have to find a
way to ``factorize'' non-generic motion polynomials as well. One
possibility to do this has been presented in \cite{li15b}. There, we
showed that for every bounded motion polynomial $C = P + \eps Q$ of
degree $n$ a polynomial $R \in \R[t]$ of degree $m \le \deg\mrpf(P)$
exists such that $CR$ admits a factorization. This statement for
planar motion polynomials has already been proved in \cite{gallet15}.
Although not reduced, the motion polynomial $CR$ parameterizes the
same motion as $C$. Thus, at the cost of possibly doubling the degree
of $C$, we can find a factorized representation to base our linkage
construction upon. In fact, the worst case $m = n$ occurs for
translational motions, that is, for generic (non-circular)
trajectories.

Multiplication with $R \in \R[t]$ changes the algebraic properties of
the motion polynomial but not the motion itself. However, for our
purpose it is sufficient to preserve just one particular
trajectory. Therefore, the following refinement is conceivable. If
$H \in \H[t]$ is a quaternion polynomial, the orbit of the affine
origin with respect to $CH$ is the same as the orbit with respect to
$C$. Indeed, if $C = P + \eps Q$, this orbit is parameterized by
$P\cj{P} + 2P\cj{Q}$ but, because of $H\cj{H} \in \R[t]$, also by
\begin{equation*}
  PH\cj{PH} + 2PH\cj{QH} =
  P(H\cj{H})\cj{P} + 2P(H\cj{H})\cj{Q} =
  (H\cj{H})(P\cj{P} + 2P\cj{Q}).
\end{equation*}
In order to retain the orbit of a point different from the affine
origin, we can similarly right-multiply with a motion polynomial that
fixes that point. Besides the extension to three-dimensions, a major
contribution in this article over \cite{gallet15} is to base our
linkage construction on factorizations of $CH$ instead of $CR$, thus
improving the bounds on the number of links and joints, even in the
planar case. But before doing so we have to answer three questions:
\begin{itemize}
\item Given a motion polynomial $C$ of minimal degree to a bounded trajectory, does there always exist a
  quaternion polynomial $H$ such that $CH$ admits a factorization?
\item Assuming a positive answer to the previous question: How can we determine
  $H \in \H[t]$?
\item Finally, how can we compute the factorization of $CH$. Note that
  \autoref{alg:gfactor} will not do unless it can already be used to
  factorize $C$ (in which case we simply have $H = 1$).
\end{itemize}
The first question will be positively answered in \autoref{th:3},
below. Its proof answers the second and third question.

A polynomial $P \in \H[t]$ of
degree $n$ and without non-trivial real  factors can have at most $n$
quaternion zeros. If $P$ has real factors, the situation is slightly
more complicated. However, with the boundedness condition, we will only encounter the simpler
situation that $\mrpf(P)$ is a \emph{strictly positive} real
polynomial, i.e., it is irreducible over $\R$. Then the quaternion
zeros of $\mrpf(P)$ are precisely the quaternion zeros of the
irreducible, real, quadratic factors of $\mrpf(P)$ and these are well
known:

\begin{lemma}[\cite{huang02}]
  \label{lem:2}
  If the real polynomial $R = t^2 + bt + c$ is irreducible over $\R$
  ($4c - b^2 > 0$), the set of its quaternion zeros is
  \begin{equation*}
    \{\tfrac{1}{2}(-b + s_1\qi + s_2\qj + s_3\qk) \mid (s_1,s_2,s_3)\in\R^3,\ s_1^2+s_2^2+s_3^2=4c-b^2 \}.
  \end{equation*}
\end{lemma}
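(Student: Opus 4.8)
The plan is to decompose a candidate zero into its real (scalar) part and its vector part and to exploit that the square of a pure quaternion is a non-positive real number. Write an arbitrary $h \in \H$ as $h = h_0 + v$ with $h_0 \in \R$ and $v = h_1\qi + h_2\qj + h_3\qk$. Then $v^2 = -(h_1^2 + h_2^2 + h_3^2) = -\Norm{v}$ is a non-positive real, and since $b$, $c$ are real (hence central in $\H$), one computes
\begin{equation*}
  h^2 + bh + c = \bigl(h_0^2 - \Norm{v} + bh_0 + c\bigr) + (2h_0 + b)\,v.
\end{equation*}
Because $1$ and $v$ (when $v \neq 0$) are $\R$-linearly independent, $R(h) = 0$ is equivalent to the two conditions $(2h_0 + b)v = 0$ and $h_0^2 - \Norm{v} + bh_0 + c = 0$.

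First I would dispose of the case $v = 0$: then $h = h_0$ would be a real root of $R$, contradicting $4c - b^2 > 0$. Hence every quaternion zero has $v \neq 0$, which forces $h_0 = -b/2$ from the first condition. Substituting this into the second condition gives $\Norm{v} = c - b^2/4 = \tfrac14(4c - b^2)$, i.e. $h_1^2 + h_2^2 + h_3^2 = \tfrac14(4c - b^2)$. Setting $s_i \coloneqq 2h_i$ we get exactly $h = \tfrac12(-b + s_1\qi + s_2\qj + s_3\qk)$ with $s_1^2 + s_2^2 + s_3^2 = 4c - b^2$, so the zero set is contained in the claimed set. The reverse inclusion follows by running the computation backwards: for any $(s_1,s_2,s_3)$ with $s_1^2+s_2^2+s_3^2 = 4c-b^2$, the quaternion $h = \tfrac12(-b + s_1\qi + s_2\qj + s_3\qk)$ satisfies $2h_0 + b = 0$ and $\Norm{v} = \tfrac14(4c-b^2)$, so both bracketed terms in the displayed identity vanish and $R(h) = 0$.

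There is no genuine obstacle here; the only point requiring care is the splitting of the quaternionic equation into its scalar and vector components, which is legitimate precisely because $b$ and $c$ are real and therefore commute with $h$ — this is also the reason our application only ever needs to factor real irreducible quadratics. For completeness one could remark that $R$ thus has a whole two-sphere of zeros, in stark contrast to the real or complex case; this non-uniqueness is exactly the source of the freedom exploited later in the factorization and linkage constructions.
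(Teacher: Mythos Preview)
Your argument is correct. The paper does not actually supply a proof of this lemma; it is stated as a citation to \cite{huang02} and used as a black box. Your proof is the standard elementary one: split $h$ into scalar and vector parts, use that a pure quaternion squares to a non-positive real, and separate the resulting equation into its scalar and vector components. Every step checks out, including the disposal of the $v = 0$ case via irreducibility and the reverse inclusion. There is nothing to compare against in the paper itself, so your self-contained derivation is a genuine addition rather than a duplication.
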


One consequence of \autoref{lem:2} is the existence of a unique zero
whose vector part (the projection onto
$\langle \qi, \qj, \qk \rangle$) equals a positive multiple of an
arbitrary non-zero vector $v \in \R^3$. We call this a zero in
\emph{direction of $v$.}

We say that a motion polynomial $C=P+\eps Q$ is \emph{tame} if
$\mrpf(P)$ and $Q\cj{Q}$ are relatively prime. Note that minimal
motions are always tame by Lemma~\ref{lem:1}. The following lemma
allows to reduce the degree of $\mrpf(P)$ for tame polynomials.

\begin{lemma}
  \label{lem:3}
  Consider a tame, monic, and bounded motion polynomial $C = P + \eps Q$ and
  denote by $F \in \R[t]$ a monic, quadratic and irreducible factor of $p
  \coloneqq \mrpf(P)$. Then there exists a tame, monic, and bounded motion
  polynomial $C' = P' + \eps Q'$ such that $\mrpf(P') = \mrpf(P)/F$ and linear
  quaternion polynomials $L$, $L'$ such that $CL = L'C'$.
\end{lemma}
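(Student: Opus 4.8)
The goal is to "peel off" one irreducible quadratic factor $F$ from $\mrpf(P)$ at the cost of multiplying by linear quaternion polynomials on the left and right. The plan is to find a quaternion zero of $F$ that is suitably adapted to $C$, use it to build the linear factor $L$, and then read off $C'$ and $L'$ from a division. Concretely: since $F \mid \mrpf(P) = \gcd(P,\cj P)$, write $P = F\widetilde P$ for some $\widetilde P \in \H[t]$. I would look for $h \in \H$ with $F(h) = 0$ and $\cj h$ a zero of $F$ as well (automatic, since $F$ is real so $\cj h + h = -b$ and $h\cj h = c$), chosen so that $t - h$ is a right factor of something useful. The natural candidate is to take $h$ to be the zero of $F$ in the direction determined by a vector extracted from $\widetilde P$ or $Q$ — this is where tameness enters, guaranteeing the relevant vector is nonzero modulo $F$.

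\textbf{Key steps.} First, factor $P = F\widetilde P$. Since $F$ is real and central, $CL = (P + \eps Q)L$; I want $L = t - h$ with $h$ a zero of $F$, so that $F = (t-\cj h)(t-h)$ (the standard real factorization via $h + \cj h = -b$, $h\cj h = c$). Then $PL = F\widetilde P(t-h)$, and I would like to commute the factor $(t-h)$ past $\widetilde P$: there is a known "conjugation" identity for quaternion polynomials, $\widetilde P(t-h) = (t - \widetilde h)\widetilde P_1$ for suitable $\widetilde h$ and $\widetilde P_1$ (this is exactly the mechanism behind the non-uniqueness of factorizations in \cite{hegedus13}), provided $\widetilde P(h) \neq 0$. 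Choosing the direction of $h$ so that $\widetilde P(h) \neq 0$ makes this work. Then $PL = F(t-\widetilde h)\widetilde P_1 = (t-\cj h)(t-h)(t-\widetilde h)\widetilde P_1$; I would want to arrange things so that $(t-\cj h) = L'$ factors off on the left, giving $P' = (t-h)(t-\widetilde h)\widetilde P_1 / (\text{one real quadratic}) $ — more carefully, $L' = t - \cj h$ and $P' = $ the remaining product, with $\mrpf(P')$ having exactly one fewer copy of $F$ because we have moved one of the two real linear-over-$\H$ factors of $F$ out to the left. Simultaneously one must track the dual part: $QL$ must be shown to combine with $P'$ so that $C' = P' + \eps Q'$ is again a motion polynomial ($C'\cj{C'}$ real: this follows since $CL\cj{CL} = C\cj C \cdot L\cj L$ is real and $L'\cj{L'}$ is real, so $C'\cj{C'} = (CL\cj{CL})/(L'\cj{L'})$ is real) and again tame.

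\textbf{Main obstacle.} The delicate part is the bookkeeping that $\mrpf(P')$ is exactly $\mrpf(P)/F$ — not smaller (we removed too much) and not the same (the commutation reintroduced a factor of $F$). This requires showing that the newly produced linear factor $t - \widetilde h$ on the right, and the quaternion polynomial $\widetilde P_1$, contribute \emph{no} factor of $F$ to $\mrpf$, which is precisely the statement that $\widetilde P(h) \neq 0$ and the analogous non-vanishing for the dual part — and that in turn is where the tameness hypothesis ($\mrpf(P)$ coprime to $Q\cj Q$) is essential, since it prevents $F$ from dividing the dual-part data and forces the relevant direction vectors to be well-defined and nonzero modulo $F$. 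The secondary obstacle is verifying that $C'$ remains \emph{bounded}: since boundedness of $C$ means $C\cj C$ has no real roots, and $C'\cj{C'} = C\cj C \cdot L\cj L / (L'\cj{L'})$ with $L\cj L = L'\cj{L'} = F$ both root-free over $\R$ (as $F$ is irreducible), one gets $C'\cj{C'} = C\cj C$, which is root-free — so boundedness is actually immediate once the algebra is set up correctly, and likewise $t - h$, $t-\cj h$ are bounded linear motion polynomials because their norm $F$ has no real root. I would present the construction via an explicit choice of $h$ (the zero of $F$ in the direction of the vector part of $\widetilde P(h_0)$ for $h_0$ any fixed zero of $F$, or more robustly in the direction making a certain resultant nonzero), then verify the four claims — $CL = L'C'$, $C'$ a motion polynomial, $C'$ tame, $\mrpf(P') = \mrpf(P)/F$ — in that order.
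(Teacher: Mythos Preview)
Your overall shape is right—factor $F$ over $\H$, multiply $C$ on the right by a linear $L$, and peel off a linear $L'$ on the left—but there is a genuine gap in how you choose $L'$. You set $L' = t-\cj{h}$ purely from the factorization $F = (t-\cj h)(t-h)$; this handles the primal part automatically, since $PL = F\widetilde P L = (t-\cj h)\bigl((t-h)\widetilde P L\bigr)$. But for $CL = L'C'$ you also need $QL = L'Q'$, i.e.\ $t-\cj h$ must be a \emph{left} factor of $QL$. Nothing in your construction forces this, and for a generic zero $h$ of $F$ it simply fails. Your remark that ``$QL$ must be shown to combine with $P'$'' does not address this; the dual part is not a bookkeeping afterthought but the place where the choice of $L'$ is actually made.

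The paper inverts your priority: pick $L = t-h$ essentially freely among the zeros of $F$, then \emph{determine} $L'$ from the dual part by setting $h' \coloneqq \cj{\czero(\cj{QL},F)}$ and $L' \coloneqq t-h'$. This guarantees $QL = L'Q'$ by construction. Tameness enters exactly here: it ensures $F \nmid QL$ (otherwise $F \mid Q\cj{Q}$), so that $\czero$ applies and returns a genuine linear factor. Since $h'$ is still a zero of $F$, one has $F = L'\cj{L'}$ as well, and the primal part falls out as $PL = L'\cj{L'}(P/F)L$, giving $P' = \cj{L'}(P/F)L$ with no commutation trick needed. The equality $\mrpf(P') = \mrpf(P)/F$ then follows from the standard fact that sandwiching $P/F$ between $\cj{L'}$ and $L$ creates no new real factor provided $t-\cj h$ avoids the finitely many right factors of $P/p$ and $t-h'$ avoids the finitely many left factors—this, not any condition involving $\widetilde P(h)$, is the genericity you need. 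Your boundedness argument via $C'\cj{C'} = C\cj C$ is correct once $L\cj L = L'\cj{L'} = F$ is in place.
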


\begin{proof}
  By Lemma~\ref{lem:2}, we can write $F=L\cj{L}$ for some monic linear
  quaternion polynomial $L = t - h$. If $F$ divides $QL$, then $F^2$ divides
  $Q\cj{Q}L\cj{L}=Q\cj{Q}F$. But then $F$ would divide $Q\cj{Q}$ which
  contradicts the tameness assumption. So, $F$ does not divide $QL$ and we can
  use \autoref{alg:czero} to compute a common zero $\cj{h'}$ of $\cj{QL}$ and
  $F$. Moreover, it is no loss of generality to assume that $t-\cj{h}$ avoids the
  finitely many right factors of $P/p$ and $t-h'$ avoids the finitely many
  left factors of $P/p$. Setting $L' \coloneqq t-h'$ and
  $Q' \coloneqq \rquo(QL, L')$,
  we obtain $QL = L'Q'$. Then
  \begin{equation*}
    CL = PL + \eps QL
       = L'\cj{L'}(P/F)L + \eps L'Q'
       = L'(\cj{L'}(P/F)L + \eps Q'),
  \end{equation*}
  and we set $C' \coloneqq P' + \eps Q'$ where $P' \coloneqq \cj{L'}(P/F)L$. By
  a standard result on quaternion polynomials $\mrpf(P')=\mrpf(P/F)$, because
  additional real factors only arise in products $P_1P_2$ when a linear right
  factor of $P_1$ is conjugate to a linear left factor of $P_2$
  (\cite[Lemma~1]{li15a} or \cite[Proposition~2.1]{cheng16}). Finally, $C'$ is
  also tame and bounded, because the norm of the dual part has not changed and
  the minimal real polynomial factor only got smaller.
\end{proof}

\begin{theorem}
  \label{th:3}
  Given a tame monic bounded motion polynomial $C = P + \eps Q \in \DH[t]$
  there exists a polynomial $H \in \H[t]$ 
  of degree
  \begin{equation*}
    \deg H = \tfrac{1}{2}\deg\mrpf(P)
  \end{equation*}
  such that $CH$ admits a factorization.
\end{theorem}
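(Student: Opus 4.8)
The plan is to induct on $m \coloneqq \tfrac{1}{2}\deg\mrpf(P)$, stripping off one irreducible quadratic factor of $\mrpf(P)$ at a time by means of \autoref{lem:3}, with \autoref{alg:gfactor} furnishing the base case $m=0$: there $P$ has no real factor, so $C$ is generic, $C$ itself admits a factorization by \cite[Theorem~1]{hegedus13}, and we may take $H = 1$.

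For the inductive step, note that since $C$ is bounded, $p \coloneqq \mrpf(P)$ is strictly positive and hence splits over $\R$ into $m$ monic irreducible quadratic factors, counted with multiplicity. Fix one such factor $F$. By \autoref{lem:3} there are linear quaternion polynomials $L$, $L'$ and a tame, monic, bounded motion polynomial $C' = P' + \eps Q'$ with $\mrpf(P') = p/F$ and $CL = L'C'$. Writing $L' = t - h'$ with $\cj{h'}$ a zero of the irreducible real quadratic $F$, we get $L'\cj{L'} = F$, so $L'$ is a bounded linear motion polynomial (a rotation, because $F$ has no real root). Since $\deg\mrpf(P') = 2(m-1)$, the induction hypothesis furnishes $H' \in \H[t]$ with $\deg H' = m - 1$ and a factorization $C'H' = (t - h_1)\cdots(t - h_\ell)$ into bounded linear motion polynomials. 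Then
\begin{equation*}
  C(LH') = L'C'H' = L'\,(t - h_1)\cdots(t - h_\ell),
\end{equation*}
which exhibits a factorization of $C(LH')$. Hence $H \coloneqq LH' \in \H[t]$ works, with $\deg H = 1 + (m-1) = m = \tfrac{1}{2}\deg\mrpf(P)$.

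Alongside this one carries the invariant $H\cj{H} \in \R[t]$, which is what guarantees that $CH$ still describes a motion carrying the affine origin along the original trajectory: indeed $H\cj{H} = L(H'\cj{H'})\cj{L} = (H'\cj{H'})\,L\cj{L} = (H'\cj{H'})F$ since $L\cj{L} = F$ is real and central, so iterating shows $H\cj{H}$ equals the product of all quadratic factors peeled off, i.e.\ $H\cj{H} = \mrpf(P)$. Unwinding the recursion gives $H = L_1\cdots L_m$ and a factorization $CH = L_1'\cdots L_m'\,G$, where the $L_i'$ are the rotations accumulated from successive uses of \autoref{lem:3} and $G$ is the factorization of the remaining generic factor produced by \autoref{alg:gfactor}; this is also the explicit answer to the second and third questions raised before the theorem.

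The genuinely substantial step is \autoref{lem:3}; what is left here is bookkeeping. The point to watch—and the reason tameness is hypothesized and must be propagated—is that deleting $F$ from $\mrpf(P)$ must not create a new real factor of $\cj{QL}$, for that is exactly what allows \autoref{alg:czero} to return $L'$ and what forces $\deg\mrpf$ to decrease by precisely $2$ at each step; this is what yields the sharp value $\deg H = \tfrac{1}{2}\deg\mrpf(P)$ rather than a larger degree.
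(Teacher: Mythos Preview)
Your argument is correct and follows essentially the same route as the paper: induction on $\deg\mrpf(P)$, with \autoref{lem:3} supplying the reduction step and \autoref{alg:gfactor} handling the generic base case, then setting $H \coloneqq LH'$. Your additional remarks—that $L'\cj{L'}=F$ makes $L'$ a bounded rotation polynomial, and that the invariant $H\cj{H}=\mrpf(P)$ holds—are correct and useful elaborations, though not needed for the bare statement.
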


\begin{proof}
  We proceed by induction on $n \coloneqq \deg\mrpf(P)$.

  If $n=0$, then $C$ is generic and can be factored by
  Algorithm~\ref{alg:gfactor}.

  Assume that $\deg\mrpf(P)>0$ and let $F$ be a monic, irreducible,
  quadratic factor of $\mrpf(P)$. By Lemma~\ref{lem:3}, there
  exist linear polynomials $L$, $L' \in \H[t]$ and a tame motion
  polynomial $C' = P' + \eps Q'$ such that $L'C'=CL$ and
  $\deg\mrpf(P')=n-2$. By induction hypothesis, there exists a
  quaternion polynomial $H'$ of degree  $(n-2)/2$ such that
  $C'H'$ can be factorized, say $C'H'=M_1 \cdots M_{d}$ for suitable
  rotation polynomials $M_1$,\ldots,$M_d$. Then we set
  $H \coloneqq LH'$ and we have a factorization
  $CH = CLH' = L'C'H' = L'M_1\cdots M_d$.
\end{proof}

\autoref{alg:tfactor} displays pseudocode, derived from the proofs of
\autoref{lem:3} and \autoref{th:3}, to compute a factorization for a
tame motion polynomial $C$. It returns a list $[L_1,\ldots,L_n]$ of
linear motion polynomials and $H \in \H[t]$ such that $L_1 \cdots L_n = CH$.
The correctness is clear from the proof of \autoref{lem:3}.

\begin{algorithm}
  \caption{$\tfactor(C)$ (Factorization of tame motion polynomials)}
  \label{alg:tfactor}
  \begin{algorithmic}[1]
    \Require Tame motion polynomial $C = P + \eps Q \in \DH[t]$.
    \Ensure Pair $(M,H)$ consisting of a list $M = [L_1,\dots,L_n]$ of linear
    motion polynomials and a quaternion polynomial $H$ such that $CH = L_1 \cdots L_n$.
    \If{$\mrpf(P)=1$}
      \State \Return $(\gfactor(C), 1)$
    \Else
      \State Choose an irreducible quadratic factor $F$ of~$\mrpf(P)$.
      \State Choose a random zero $h\in\H$ of $F$ and set
       $L \leftarrow t-h$. \label{line:tfactor-randzero}
      \State $E \leftarrow \cj{QL}$
      \State $h'\leftarrow \cj{\czero(E, F)}$
      \State $L'\leftarrow t - h'$
      \State $P'\leftarrow \cj{L'}\tfrac{P}{F}L$, $Q'\leftarrow \rquo(\cj{E}, L')$
      \State $C' \leftarrow P'+\eps Q'$
      \State $M',H'\leftarrow \tfactor(C')$
      \State \Return $(\lplus{L'}{M'}, LH')$
    \EndIf
  \end{algorithmic}
\end{algorithm}

\begin{remark} In Line~\ref{line:tfactor-randzero} of \autoref{alg:tfactor} we
  are free to pick one among infinitely many quaternion zeros of an irreducible
  quadratic polynomial. We only have to avoid finitely many ``dangourous'' zeros
  of left or right factors of $P/p$, as mentioned in the proof of
  \autoref{lem:3}. This freedom is quite advantageous for applications of our
  algorithm in order to fulfill engineering needs. In the planar case, one has
  to pick the suitable one among two conjugate solutions. This is always
  possible because $P$ is in a sub-algebra isomorphic to $\C$. Hence, we need
  not distinguish between left and right factors and no two linear factors of
  $P/p$ are conjugate.
\end{remark}

\begin{example}
  \label{ex:5}
  We continue \autoref{ex:1} and illustrate \autoref{alg:tfactor} at
  hand of the elliptic translation $C = t^2 + 1 + \eps(b\qj t + a\qi)$
  with $a > b \ge 0$. We have $f = t^2+1$ and may choose $h = \qi$
  and $L = t - \qi$ in Line~\ref{line:tfactor-randzero}. With
  $P = t^2 + 1$ and $Q = b\qj t + a\qi$ this yields
  \begin{equation*}
    E = \cj{Q L} = -b\qj t^2 - (a\qi + b\qk)t + a,\quad
    h' = \frac{(a^2-b^2)\qi + 2ab\qk}{a^2+b^2}.
  \end{equation*}
  The algorithm recursively calls itself with input $C' = P' + \eps Q'$ where
  \begin{gather*}
    L' = t - h',\quad
    P' = \frac{\cj{L'}PL}{f} = t^2 + \frac{1}{a^2+b^2}\bigl( 2b(a\qk - b\qi)t + a^2 - 2ab\qj - b^2 \bigr),\\
    Q' = \rquo(\cj{E}, L') = b\qj t + \frac{a}{a^2+b^2}\bigl( (a^2-b^2)\qi + 2ab\qk \bigr).
  \end{gather*}
  It admits the factorization $C' = (t-k_1)(t-k_2)$ where
  \begin{equation*}
    k_1 = -\frac{(a^2-b^2)\qi + 2ab\qk}{a^2+b^2}
          - \eps\qj\frac{a^2+b^2}{2b},
    \quad
    k_2 = \qi + \eps\qj \frac{a^2-b^2}{2b}.
  \end{equation*}
  This gives the factorization 
  \begin{equation}
    \label{eq:4}
      CH = (t-h')(t-k_1)(t-k_2).
  \end{equation}
  The quaternion polynomial factor equals $H = t - \qi$.

  Note that above factorization is spatial even if the elliptic translation is a
  planar motion. In order to obtain a planar factorization, we may select $h =
  \qk$ and $L = t - \qk$ in Line~\ref{line:tfactor-randzero}. This choice gives
  a simple planar factorization:
  \begin{equation*}
    E = -b\qj t^2 - (a-b)\qi t - a \qj,\quad
    L' = t + \qk,\quad
    C' = t^2 - 2\qk t - 1 + \eps (b\qj t + a\qi)
  \end{equation*}
  and
  \begin{equation}\label{eq:5}
    CH = (t + \qk)(t-\qk - \tfrac{1}{2}\eps\qj(a - b))(t-\qk + \tfrac{1}{2}\eps\qj(a + b))
  \end{equation}
  where $H = t - \qk$.
\end{example}

\begin{remark}
  One can show that all trajectories of the motion $CH$ in
  \autoref{ex:5} are ellipses (or line segments). In the spatial case,
  the ellipses' planes are not all parallel. This is a characterizing
  property of the \emph{Darboux-motion} whose factorizations have
  already been discussed in \cite{li15c}.
\end{remark}

\subsection{Bennett flips}
\label{sec:bennett-flips}

In \autoref{sec:factorization} we showed how to factor a bounded
motion polynomial, possibly after right multiplication with a
quaternion polynomial, into the product of bounded linear motion
polynomials. This section is a little intermezzo before constructing
linkages from these factorizations in
\autoref{sec:scissor-linkage}. It introduces a technique we call
``Bennett flip''. It will be an essential component of our linkage
construction.

In general, a motion polynomial admits several factorizations, each
giving rise to an open chain of revolute joints that is capable of
performing the motion parameterized by $C$. The distal joints of
different factorizations can be attached to a common link and the
resulting multi-looped linkage can still perform the motion
$C$. Adding sufficiently many factorizations one can reasonably expect
to reduce the dimension of the linkages configuration space to
one. This idea has already been successfully applied for constructing
linkages for engineering applications
\cite{hegedus15:_four_pose_synthesis} but it is not really suitable
for proving the general statement of \autoref{th:1} because it may
fail to produce linkages with only one degree of freedom. Consider,
as a warning, the Viviani motion \eqref{eq:3}. It admits only one
factorization and hence gives raise to just one single chain with two
revolute joints and two degrees of freedom.

It is, however, possible to use the above idea in the special case of
quadratic motion polynomials. They are sufficiently simple to allow a
complete discussion of all unwanted cases. Once this is available we
can combine degree two motion polynomials and their linkages to
generate mechanisms that satisfy the criteria of \autoref{th:1}.

\begin{definition}
  \label{def:bflip}
  The \emph{Bennett flip} is the map
  \begin{equation*}
      \bflip\colon \DH^2 \setminus \{ (h_1,h_2) \mid \cj{h_1} = h_2\} \to \DH^2,\quad
      (h_1,h_2) \mapsto (k_1,k_2).
  \end{equation*}
  where $k_2 \coloneqq -(\cj{h_1}-h_2)^{-1}(h_1h_2 - h_1\cj{h_1})$ and
  $k_1 \coloneqq h_1 + h_2 - k_2$.
\end{definition}

In order to understand the idea behind \autoref{def:bflip}, take two rotation
quaternions $h_1$, $h_2$ with $\cj{h_1} \neq h_2$, set $(k_1,k_2) \coloneqq
\bflip(h_1,h_2)$ and consider the polynomials $C = (t-h_1)(t-h_2)$, $C\cj{C} =
M_1M_2$ where $M_1 = (t-h_1)(t-\cj{h_1})$ and $M_2 = (t-h_2)(t-\cj{h_2})$. From
\begin{equation*}
  (t-h_1)(t-h_2) = (t-h_1)(t-\cj{h_1}) + (\cj{h_1}-h_2)t + h_1h_2 - h_1\cj{h_1}
\end{equation*}
we see that $k_2$ is the zero of $\rrem(C,M_1)$ and $k_1$ is the zero of
$\lquo(C,t-k_2) = t - k_1$. In other words, $k_1$ and $k_2$ are obtained by
applying \autoref{alg:gfactor} to $C = (t-h_1)(t-h_2)$ and yield the second
factorization $C = (t-k_1)(t-k_2)$. This interpretation accounts for the name
``Bennett flip'': Given a rotation quaternion $h$, denote its axes by
$\axis{h}$. In general the axes $\axis{h_1}$, $\axis{h_2}$, $\axis{k_2}$, and
$\axis{k_1}$ form a Bennett linkage. Exceptional cases exist and will be
described in detail.

\begin{proposition}
  \label{prop:bflip}
  Consider two rotation quaternions $h_1$, $h_2 \in \DH$ with
  $\cj{h_1} \neq h_2$ and let $(k_1,k_2) \coloneqq
  \bflip(h_1,h_2)$. Then the following hold:
  \begin{enumerate}
  \item Also $k_1$ and $k_2$ are rotation quaternions.
  \item The restriction of the Bennett flip to pairs of rotation
    quaternions is an involution, that is,
    $\bflip(k_1,k_2) = (h_1,h_2)$.
  \item The restriction of the Bennett flip to pairs of rotation
    quaternions is birational.
  \item We have $\minpol(k_1) = \minpol(h_2)$ and $\minpol(k_2) =
    \minpol(h_1)$ where $\minpol(h) \coloneqq (t-h)(t-\cj{h})$ is the
    \emph{minimal polynomial} of $h \in \DH$.
  \item We have
    $\bflip(h_2,\cj{k_2}) = (\cj{h_1},k_1)$,
    $\bflip(\cj{k_2},\cj{k_1}) = (\cj{h_2},\cj{h_1})$ and
    $\bflip(\cj{k_1},h_1) = (k_2,\cj{h_2})$.
  \end{enumerate}
\end{proposition}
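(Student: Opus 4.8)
The plan is to base everything on a single polynomial identity together with bookkeeping of minimal polynomials. Write $C \coloneqq (t-h_1)(t-h_2)$ and $M_i \coloneqq \minpol(h_i) = (t-h_i)(t-\cj{h_i})$ for $i=1,2$; since $h_1,h_2$ are rotation quaternions the $M_i$ are irreducible quadratics in $\R[t]$, hence central in $\DH[t]$, and $C\cj{C} = (t-h_1)\,M_2\,(t-\cj{h_1}) = M_1M_2$. The hypothesis $\cj{h_1}\neq h_2$ enters through invertibility of $g \coloneqq \cj{h_1}-h_2$, which is exactly what makes $\bflip$ well defined (equivalently, what makes $C$ a generic motion polynomial). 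As noted immediately before the statement, $\bflip(h_1,h_2)=(k_1,k_2)$ yields the \emph{second} factorization, so $C = (t-k_1)(t-k_2)$ with $k_1 = h_1+h_2-k_2$. I would first establish claim~(4): either from the fact that $k_2$ is a zero of the irreducible real quadratic $M_1$, or, more transparently, by rewriting the definition as $k_2 = (\cj{h_1}-h_2)^{-1}h_1(\cj{h_1}-h_2) = g^{-1}h_1g$ and using that the norm $\Norm{g}\in\D$ is central (so $\cj{k_2} = g^{-1}\cj{h_1}g$), one gets $\minpol(k_2) = g^{-1}M_1g = M_1$; then $M_1\,\minpol(k_1) = (t-k_1)\,\minpol(k_2)\,(t-\cj{k_1}) = C\cj{C} = M_1M_2$, and cancelling the nonzero real polynomial $M_1$ gives $\minpol(k_1) = M_2$. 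Claim~(1) follows at once: $\minpol(k_1)$ and $\minpol(k_2)$ are real irreducible quadratics, so $k_i+\cj{k_i}$ and $k_i\cj{k_i}$ are real and the primal parts of $k_1,k_2$ are non-real (irreducibility forbids a real primal part), hence $t-k_1$ and $t-k_2$ are bounded linear motion polynomials.

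For claim~(2) I would feed $(k_1,k_2)$ into the same ``remainder-then-quotient'' description of $\bflip$ that was used to compute $(k_1,k_2)$ in the first place: its second component is the unique zero of $\rrem\bigl((t-k_1)(t-k_2),\minpol(k_1)\bigr) = \rrem(C,M_2) = C-M_2$, a degree-one polynomial with invertible leading coefficient $\cj{h_2}-h_1 = -\cj{g}$ which has $h_2$ as a zero (directly, or because $C(h_2)=0$ since $(t-h_2)$ right-divides $C$, $M_2(h_2)=\minpol(h_2)(h_2)=0$, and evaluation is additive); hence the second component is $h_2$ and the first is $k_1+k_2-h_2 = h_1$. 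Claim~(3) is then formal: $\bflip$ is given by dual-quaternion arithmetic and a single inversion, hence by rational functions (inversion in $\DH$ is rational because $\Norm{x}\in\D$ and dual numbers invert rationally), the pairs of rotation quaternions form a quasi-affine variety by the description above, and by claim~(2) $\bflip$ restricted to it is its own inverse wherever it is defined; therefore it is birational.

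Claim~(5) consists of three further instances of the mechanism of claim~(2). I would first manufacture the polynomial identity $E \coloneqq (t-h_2)(t-\cj{k_2}) = (t-\cj{h_1})(t-k_1)$ by right-multiplying $C=(t-k_1)(t-k_2)$ by $(t-\cj{k_2})$, using $\minpol(k_2)=M_1=(t-h_1)(t-\cj{h_1})$ to rewrite the result as $(t-h_1)(t-\cj{h_1})(t-k_1)$, and left-cancelling the monic factor $(t-h_1)$. Conjugating gives $E' \coloneqq \cj{E} = (t-k_2)(t-\cj{h_2}) = (t-\cj{k_1})(t-h_1)$, and conjugating $C$ gives $\cj{C} = (t-\cj{h_2})(t-\cj{h_1}) = (t-\cj{k_2})(t-\cj{k_1})$. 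Each of the three claimed identities then says that $\bflip$ exchanges the two displayed factorizations of $E$ (for $\bflip(h_2,\cj{k_2})$), of $\cj{C}$ (for $\bflip(\cj{k_2},\cj{k_1})$), respectively of $E'$ (for $\bflip(\cj{k_1},h_1)$); in each case the second output is pinned down as in claim~(2) as the unique zero of a degree-one remainder, identified via $\minpol(\cj{h_i})=M_i$, $\minpol(\cj{k_i})=\minpol(k_i)$ and the additive relations $k_1+k_2=h_1+h_2$, $k_2+\cj{k_2}=h_1+\cj{h_1}$, $k_1+\cj{k_1}=h_2+\cj{h_2}$ read off from claim~(4), and the first output follows by subtraction.

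I expect the bulk of the real work to lie in the non-degeneracy hypotheses rather than in the identities themselves. The argument depends on the leading coefficients of several degree-one remainders being invertible and on the arguments of the flips in claim~(5) lying in the domain of $\bflip$; tracing these through, claims~(1)--(4) and $\bflip(\cj{k_2},\cj{k_1})=(\cj{h_2},\cj{h_1})$ need only the standing assumption, whereas $\bflip(h_2,\cj{k_2})=(\cj{h_1},k_1)$ and $\bflip(\cj{k_1},h_1)=(k_2,\cj{h_2})$ additionally require $M_1\neq M_2$ — that is, that the four axes genuinely form a Bennett linkage rather than a degenerate spherical configuration (the Viviani motion, for which $\bflip$ is the identity, is the warning example). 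Once this is made precise, the remaining steps — the cancellations, the conjugation identity $\cj{k_2}=g^{-1}\cj{h_1}g$, and the coefficient comparisons — are routine.
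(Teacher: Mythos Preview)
Your approach mirrors the paper's: both interpret $\bflip$ through the factorization algorithm applied to $C=(t-h_1)(t-h_2)$, obtain Items~1, 2, 4 from this interpretation, deduce Item~3 from Item~2, and prove Item~5 by multiplying the identity $C=(t-k_1)(t-k_2)$ by suitable conjugate linear factors and cancelling the central real quadratic~$M_1$. Your explicit conjugation identity $k_2=g^{-1}h_1g$ with $g=\cj{h_1}-h_2$ is a nice addition that makes Item~4 transparent without appealing to the algorithm, and your verification that the relevant remainders have invertible leading coefficients is more explicit than the paper's.

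More notably, you correctly flag a non-degeneracy issue the paper passes over: the first and third identities in Item~5 tacitly require $M_1\neq M_2$. Indeed, when $M_1=M_2$ one checks directly that $k_2=h_2$ (equivalently $M_1(h_2)=0$), hence $(k_1,k_2)=(h_1,h_2)$, and then $\bflip(h_2,\cj{k_2})=\bflip(h_2,\cj{h_2})$ falls outside the domain of $\bflip$. So those two identities are not even well-posed in that case, and your caveat is warranted; the paper's proof silently assumes the generic situation.
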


\begin{proof}
  Items~1 and 2 follow from the interpretation of the Bennett flip as
  application of \autoref{alg:gfactor} to $C = (t-h_1)(t-h_2)$.
  Item~2 implies Item~3. Item~4 follows again from
  \autoref{alg:gfactor} because $\minpol(h) = M_i$ in
  Line~\ref{line:czero} of that algorithm. In order to see the last
  item, we may multiply the equation $(t-h_1)(t-h_2) = (t-k_1)(t-k_2)$
  with $t-\cj{h_1}$ from the left and with $t-\cj{k_2}$ from the
  right. With
  $M_1 \coloneqq (t-h_1)(t-\cj{h_1}) = (t-k_2)(t-\cj{k_2})$ we obtain
  $M_1(t-h_2)(t-\cj{k_2}) = (t-\cj{h_1})(t-k_1)M_1$. Because $M_1$
  commutes with the other factors, the first equation of Item~5
  follows. The other equations follow by iterating this argument.
\end{proof}

\begin{remark}
  The statements of \autoref{prop:bflip} on the restriction of
  $\bflip$ to pairs of rotation quaternions are also true for $\bflip$
  itself. Here, we only need the weaker formulation that allows a
  simpler proof.
\end{remark}

With $(k_1,k_2) \coloneqq \bflip(h_1,h_2)$ we have
$(t-h_1)(t-h_2) = (t-k_1)(t-k_2)$. This ensures that the four-bar
linkage with axes $\axis{h_1}$, $\axis{h_2}$, $\axis{k_2}$, and
$\axis{k_1}$ moves with at least one degree of freedom. Moreover, it
is elementary to see that it moves with at most one degree of freedom,
if these axes are all different. If this is not the case, we have
$\axis{h_1} = \axis{k_1}$ and $\axis{h_2} = \axis{k_2}$ (this
comprises the case where all axes coincide). The following lemma
provides a sufficient criterion to exclude this.

\begin{lemma}
  \label{lem:4}
  Given are rotation quaternions $h_1$, $h_2$, $k_1$, $k_2$ with
  $(k_1,k_2)=\bflip(h_1,h_2)$.
  \begin{itemize}
  \item We have $\axis{h_1} = \axis{h_2}$ if and only if $h_1 -
    \cj{h_1}$ and $h_2 - \cj{h_2}$ are linearly dependent. In this
    case, $\axis{h_1} = \axis{h_2} = \axis{k_1} = \axis{k_2}$.
  \item Provided $\axis{h_1} \neq \axis{h_2}$, we have
    $\axis{h_1} = \axis{k_1}$ and $\axis{h_2} = \axis{k_2}$ if and
    only if $\minpol(h_1) = \minpol(h_2)$.
  \end{itemize}
\end{lemma}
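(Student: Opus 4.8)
The plan is to translate everything into the quaternion algebra and exploit the explicit Bennett-flip formulas together with item~4 of \autoref{prop:bflip}. First I would recall that for a rotation quaternion $h$ the axis $\axis{h}$ is the line whose direction is the vector part of $h - \cj{h}$ (this is the primal information) together with a moment determined by the dual part; in particular, the \emph{direction} of $\axis{h}$ is the direction of $h - \cj{h} \in \langle \qi,\qj,\qk\rangle$, and $\minpol(h) = (t-h)(t-\cj{h})$ encodes both the rotation axis direction and the point on it that is held fixed (via the dual part), so that $\minpol(h_1) = \minpol(h_2)$ is equivalent to the two rotations sharing the same \emph{full} axis, i.e.\ $\axis{h_1} = \axis{h_2}$ together with equal "angular data" at that axis --- but wait, $\minpol$ does not see the angle, it sees exactly the axis as a line in space. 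So I should be careful: $\minpol(h) = \minpol(h')$ iff $\axis{h} = \axis{h'}$. This observation is the crux and makes the two bullet points almost formal once it is established.

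For the first bullet: $\axis{h_1} = \axis{h_2}$ means the two lines coincide. The primal parts force the directions to agree, i.e.\ the vector parts of $h_1-\cj{h_1}$ and $h_2-\cj{h_2}$ are parallel (linear dependence over $\R$, allowing zero on one side only in degenerate translation-like cases, which boundedness excludes); conversely, if they are linearly dependent then the two rotation axes are parallel lines, and I still need to argue they are the \emph{same} line. Here I would use the Bennett-flip identity $(t-h_1)(t-h_2) = (t-k_1)(t-k_2)$: if $\axis{h_1} \parallel \axis{h_2}$ but the lines were distinct, the product $(t-h_1)(t-h_2)$ would parameterize a planar motion (composition of two rotations about parallel axes) whose primal part has degree $2$ with no real factor, and then $\gfactor$ would force $\{\minpol(k_1),\minpol(k_2)\} = \{\minpol(h_1),\minpol(h_2)\}$ by item~4, giving the axes of $k_1,k_2$ parallel to those of $h_1,h_2$ as well; a direct computation with the $\bflip$ formula (or the observation that a planar two-rotation motion with distinct parallel axes is in fact a non-trivial translation composed with a rotation, contradicting that both factors are bounded rotations about axes of a genuine Bennett quadrilateral) then shows the only consistent possibility is $\axis{h_1} = \axis{h_2}$, and in that case all four axes collapse to this common line. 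Actually the cleanest route to the "in this case all four axes coincide" claim is: if $\axis{h_1} = \axis{h_2} = \ell$, then $(t-h_1)(t-h_2)$ is a rotation about $\ell$ (sum of two co-axial rotations), its only factorization into two bounded linear motion polynomials with these norm-factors is again into rotations about $\ell$, hence $\axis{k_1} = \axis{k_2} = \ell$ too.

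For the second bullet, assume $\axis{h_1} \neq \axis{h_2}$, so the four-bar with axes $\axis{h_1},\axis{h_2},\axis{k_2},\axis{k_1}$ has at least one degree of freedom. By the elementary rigidity argument alluded to in the paragraph before the lemma, the only way it fails to have \emph{exactly} one degree of freedom is the degenerate "pairwise coincident" case $\axis{h_1} = \axis{k_1}$ and $\axis{h_2} = \axis{k_2}$. Now $\axis{h_1} = \axis{k_1}$ is equivalent to $\minpol(h_1) = \minpol(k_1)$, and by item~4 of \autoref{prop:bflip} we have $\minpol(k_1) = \minpol(h_2)$; hence $\axis{h_1} = \axis{k_1}$ iff $\minpol(h_1) = \minpol(h_2)$, and symmetrically $\axis{h_2} = \axis{k_2}$ iff $\minpol(h_2) = \minpol(h_1)$ via $\minpol(k_2) = \minpol(h_1)$. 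So the two coincidences happen together and both are equivalent to $\minpol(h_1) = \minpol(h_2)$. Conversely, if $\minpol(h_1) = \minpol(h_2)$ then item~4 gives $\minpol(k_1) = \minpol(h_2) = \minpol(h_1)$ and $\minpol(k_2) = \minpol(h_1) = \minpol(h_2)$, i.e.\ $\axis{k_1} = \axis{h_1}$ and $\axis{k_2} = \axis{h_2}$, as claimed. This direction is immediate; the work is all in the equivalence "$\axis{h} = \axis{h'} \iff \minpol(h) = \minpol(h')$", which I expect to be the main obstacle: one must check that the dual part of $\minpol(h)$ really does pin down the \emph{moment} of the axis and not just its direction, i.e.\ that two bounded rotation quaternions with the same minimal polynomial necessarily have the same axis as an affine line, which is a short but slightly delicate computation with the dual-quaternion norm conditions $h + \cj{h} \in \R$, $h\cj{h} \in \R$.
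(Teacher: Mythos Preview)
Your proof hinges on the claimed equivalence ``$\minpol(h)=\minpol(h')$ if and only if $\axis{h}=\axis{h'}$'', and this equivalence is false. The minimal polynomial $\minpol(h)=t^2-(h+\cj{h})t+h\cj{h}$ is a \emph{real} quadratic determined by just two real numbers, the scalar part $h+\cj{h}$ and the norm $h\cj{h}$; it carries no information about the direction or moment of the axis. A concrete counterexample: $h=\qi$ and $h'=\qj$ are rotation quaternions with $\minpol(h)=\minpol(h')=t^2+1$, yet $\axis{h}\neq\axis{h'}$. Consequently your argument for the forward implication of the second bullet (``$\axis{h_1}=\axis{k_1}$ is equivalent to $\minpol(h_1)=\minpol(k_1)$, and by item~4 \ldots'') breaks down at the first step, and likewise your converse direction does not follow from item~4 alone.

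The paper's route for the second bullet is quite different. For the converse, it observes that when $\minpol(h_1)=\minpol(h_2)$ the flip algorithm actually returns $(k_1,k_2)=(h_1,h_2)$ (there is only one quadratic factor of $C\cj{C}$, so $\czero$ recovers $h_2$). For the forward direction, the paper uses that $\axis{h_i}=\axis{k_i}$ means the Pl\"ucker vectors $h_i-\cj{h_i}$ and $k_i-\cj{k_i}$ are proportional, whence $k_i=a_i+b_ih_i$ with $a_i,b_i\in\R$; expanding $(t-h_1)(t-h_2)=(t-k_1)(t-k_2)$ and using the assumed linear \emph{independence} of $h_1-\cj{h_1}$ and $h_2-\cj{h_2}$ forces $b_1=b_2=1$, then $a_1=a_2=0$, so $k_i=h_i$ and item~4 gives $\minpol(h_1)=\minpol(k_2)=\minpol(h_2)$.

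There is also a smaller confusion in your treatment of the first bullet: $h-\cj{h}$ is a \emph{dual} quaternion encoding the full Pl\"ucker coordinates (direction \emph{and} moment) of $\axis{h}$, not just the direction. Linear dependence of $h_1-\cj{h_1}$ and $h_2-\cj{h_2}$ over $\R$ therefore already means $\axis{h_1}=\axis{h_2}$, not merely parallelism; the detour through ``parallel but possibly distinct'' is unnecessary.
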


\begin{proof}
  The dual quaternion $h_1 - \cj{h_1}$ has zero scalar part. Hence,
  there exist real numbers $l_1,\ldots,l_6$ such that
  $h_1 - \cj{h_1} = l_1\qi + l_2\qj + l_3\qk + \eps(l_4\qi + l_5\qj +
  l_6\qk)$
  and the Plücker line coordinates of $\axis{h_1}$ are
  $[l_1,l_2,l_3,-l_4,-l_5,-l_6]$. This implies the first claim.

  Now we turn to the second claim. By \autoref{alg:gfactor} it is
  obvious that $\minpol(h_1) = \minpol(h_2)$ implies equality of
  axes. Conversely, we have to show $h_i + \cj{h_i} = k_i + \cj{k_i}$
  and $h_i\cj{h_i} = k_i\cj{k_i}$ for $i \in \{1,2\}$ under the
  assumption of $\axis{h_1} = \axis{k_1}$ and
  $\axis{h_2} = \axis{k_2}$. This equality of axes implies linear
  dependence of their Plücker vectors. Hence, there exist real
  numbers $a_1$, $a_2$, $b_1$, $b_2$ such that $k_1 = a_1 + b_1h_1$
  and $k_2 = a_2 + b_2h_2$. By comparing coefficients on both sides of
  $(t-h_1)(t-h_2) = (t-k_1)(t-k_2)$ we obtain
  \begin{equation*}
    (b_1-1)h_1 + (b_2-1)h_2 + a_1 + a_2 = (b_1b_2-1)h_1h_2 + a_1a_2 = 0.
  \end{equation*}
  The first equation minus its conjugate is
  $(b_1-1)(h_1-\cj{h_1}) + (b_2-1)(h_2-\cj{h_2}) = 0$. Because
  $h_1-\cj{h_1}$ and $h_2-\cj{h_2}$ are linearly independent, this
  implies $b_1 = b_2 = 1$. But then, we are left with
  $a_1+a_2 = a_1a_2 =0$ and $a_1 = a_2 = 0$ follows. Thus, $h_1 = k_1$
  and $h_2 = k_2$. But then \autoref{prop:bflip}.4 implies
  $\minpol(h_1) = \minpol(k_2) = \minpol(h_2)$.
\end{proof}

\begin{remark}
  \label{rem:flip-mobility}
  Summarizing the result of \autoref{lem:4}, we can say that the
  four-bar linkage with axes $\axis{h_1}$, $\axis{h_2}$, $\axis{k_2}$,
  and $\axis{k_1}$ has exactly one degree of freedom if
  $h_1 - \cj{h_1}$ and $h_2 - \cj{h_2}$ are linearly independent and
  $\minpol(h_1) \neq \minpol(h_2)$.
\end{remark}

Under the assumptions of the previous remark, a four-bar linkage
obtained by a Bennett flip is of one of the following types:
\begin{itemize}
\item A Bennett linkage,
\item a planar anti-parallelogram linkage, or
\item a spherical linkage where the connections of opposite joint
  pairs are of equal length.
\end{itemize}
The spatial case is clear since Bennett linkages are the only movable
spatial four-bar linkages. In particular, the orthogonal distances and
angles of opposite axes pairs are equal. The statement on the planar
case has been proved in \cite{gallet15}. The spherical situation is
similar but a distinction between parallelogram and anti-parallelogram
linkages is neither possible nor necessary. The reason is that a joint
may equally well be realized by a circular arc or its supplementary
arc. Equal distance of joint pairs follows from projection of the
spatial case on the primal part, that is, by considering only the
spherical motion component. At any rate, the configuration curve of
planar or spherical linkages consists of two irreducible components
and only one of them is relevant to us.

\begin{remark}
  \label{rem:circular-flip}
  It is worth mentioning that techniques similar to Bennett flips are
  conceivable. For the circular translation $C = t^2+1+\eps \qi(t-\qj)$
  one can find infinitely many different factorizations and two of
  them may be combined, similar to the Bennett flip, to form a
  parallelogram linkage. This was used to construct new
  overconstrained 6R linkages in \cite{li14ck}.
\end{remark}

\subsection{Construction of a scissor linkage}
\label{sec:scissor-linkage}

Now we come to the actual construction of a linkage -- the last step in our
proof of \autoref{th:1}. In the preceding sections we have shown how to
\begin{enumerate}
\item compute a motion polynomial $C = P + \eps Q$ of minimal degree
  $d - c$ (where $d$ is the curve's degree and $c$ is its circularity)
  such that the trajectory of the affine origin equals the given trajectory
  and
\item determine a quaternion polynomial $H \in \H[t]$ of degree
  \begin{equation*}
    m \coloneqq \frac{1}{2}\deg\mrpf(P)
  \end{equation*}
  such that $CH$ admits the factorization $CH = (t-h_1) \cdots (t-h_n)$
  with rotation quaternions $h_1,\ldots,h_n \in \D\H$ and $n = d - c + m$.
\end{enumerate}
This factorization gives rise to an open chain of revolute axes that
can generate the motion parameterized by $CH$ in the following way.
\begin{itemize}
\item For $i \in \{1,\ldots,n\}$, the quaternion $h_i$ describes a
  rotation about the axis $\axis{h_i}$ whose Plücker coordinates
  $[l_1,\ldots,l_6]$ can be computed from
  $h_i - \cj{h_i} = l_1\qi + l_2\qj + l_3\qk - \eps(l_4\qi + l_5\qj +
  l_6\qk)$.
\item The lines $\axis{h_1},\ldots,\axis{h_n}$ determine the
  configuration of the linkage at parameter time $t = \infty$ (zero
  rotation angle). The configuration at parameter time $t \neq \infty$
  is obtained from this configuration by successively subjecting
  $\axis{h_{n-i+1}},\ldots,\axis{h_n}$ to the rotation $t-h_{n-i}$ for
  $i = 1,\ldots,n-1$.
\end{itemize}

This open revolute chain has $n$ degrees of freedom. It is our aim to
constrain its motion in such a way that a link attached to the last
joint (with axis $\axis{h_n}$) performs a motion that can be
parameterized by $CH$. The basic technique for doing this is the
Bennett flip of \autoref{sec:bennett-flips}. We can constrain the
motion of this chain to the motion parameterized by $CH$ by
\begin{enumerate}
\item picking a ``suitable'' motion polynomial $t - m_0$ and
\item recursively defining
  \begin{equation}
    \label{eq:6}
    (m_\ell, k_\ell) \coloneqq \bflip(h_\ell, m_{\ell-1}),\quad
    \text{for $\ell = 1,\ldots,n$}.
  \end{equation}
\end{enumerate}
This gives us rotation quaternions $h_1,\ldots,h_n$, $k_1,\ldots,k_n$,
and $m_0,\ldots,m_n$ that can be assembled to a linkage whose link
graph is depicted in \autoref{fig:linkgraph}. The vertices of the
linkgraph correspond to links and the edges correspond to joints. Two
vertices (links) are connected by an edge (joint) if relative motion
of the two links is constrained by the corresponding revolute
joint. The recursion \eqref{eq:6} builds the linkage from the bottom
row of edges (labeled $h_1$, \ldots, $h_n$) and the first vertical
edge (labeled $m_0$) but it is clear that we may equally well
start from any other vertical edge $m_i$, $i \in \{1,\ldots,n\}$.

\autoref{fig:linkgraph} also shows the joint hypergraph (not a graph because the
joint triples $(m_i,h_i,h_{i+1})$ and $(m_i,k_i,k_{i+1})$ belong to the same
links for $i \in \{1,\ldots,n-1\}$). The structure of this hypergraph suggests
the name ``scissor linkage'' for the resulting linkage type. The axes of $h_1$
and $m_0$ are attached to the fixed link, while $h_n$ and $m_n$ are attached to
the moving link. It performs the motion parameterized by the given motion
polynomial. In general, each loop $h_i$, $m_i$, $k_i$, $m_{i-1}$ forms a Bennett
linkage. In this case the linkage has just one degree of freedom and its
configuration curve has a single component. In other words, it satisfies all
requirements of \autoref{th:1}. It is, however, possible, that planar or
spherical four-bar linkages occur. This may be desirable, acceptable, or not
acceptable, depending on circumstances. In order to prove \autoref{th:1} we
must, however, avoid coinciding axes, that is, we have to show that one can pick
$m_0$ in such a way that the conditions of \autoref{lem:4} are never fulfilled.

\begin{figure}
  \centering
  \includegraphics{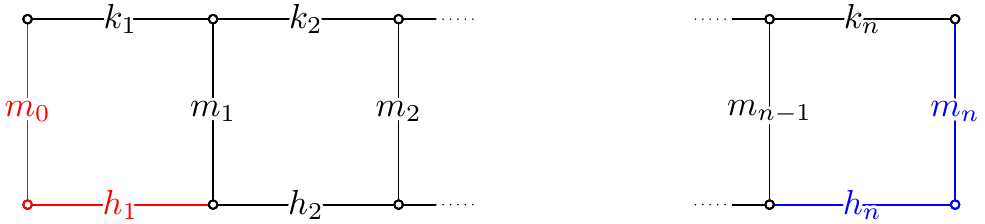}\\[1.0ex]
  \includegraphics{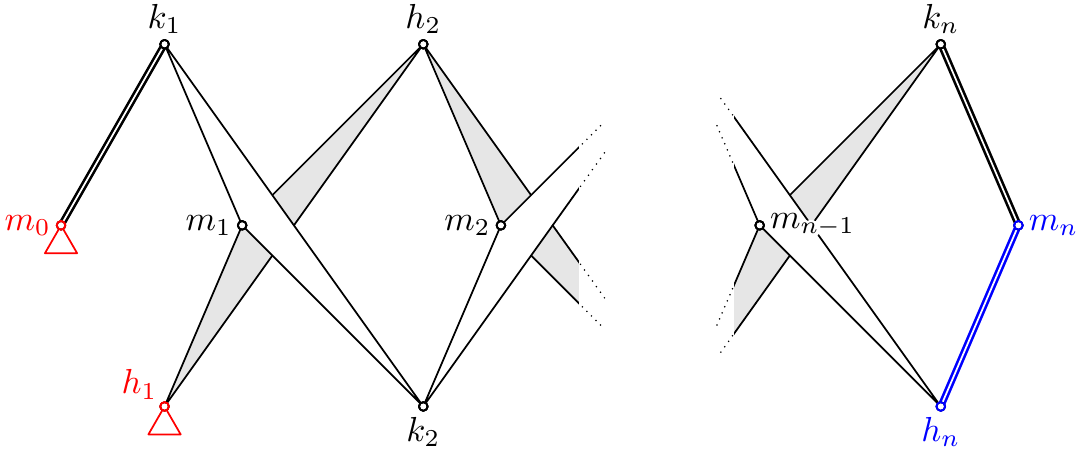}
  \caption{Link graph (top) and joint hypergraph (bottom) of scissor linkage}
  \label{fig:linkgraph}
\end{figure}

\begin{lemma}
  \label{lem:5}
  Given rotation quaternions $h_1,\ldots,h_n \in \DH$ there exists a
  rotation quaternion $m_0$ such that the quaternions $m_1,\ldots,m_n$
  and $k_1,\ldots,k_n$ obtained from the recursion \eqref{eq:6} are
  well-defined and the four-bar linkage with axes $\axis{m_{i-1}}$,
  $\axis{h_i}$, $\axis{m_i}$, $\axis{k_i}$ has precisely one degree of
  freedom for any $i \in \{1, \ldots, n\}$.
\end{lemma}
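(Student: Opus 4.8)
The plan is to show that \emph{almost every} rotation quaternion $m_0$ works and that the exceptional choices form a proper subvariety. First I would translate the requirement into conditions on $m_0$: by \autoref{lem:4} and \autoref{rem:flip-mobility} it suffices to find $m_0$ so that, for every $i\in\{1,\dots,n\}$, the recursion \eqref{eq:6} is defined up to step $i$ -- equivalently, $\cj{h_\ell}\neq m_{\ell-1}$ for all $\ell\le i$, which by \autoref{prop:bflip}.1 then also guarantees that each $m_\ell$ and $k_\ell$ is again a rotation quaternion -- and moreover $h_i-\cj{h_i}$, $m_{i-1}-\cj{m_{i-1}}$ are linearly independent and $\minpol(h_i)\neq\minpol(m_{i-1})$. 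The last condition I would settle globally: applying \autoref{prop:bflip}.4 to $(m_\ell,k_\ell)=\bflip(h_\ell,m_{\ell-1})$ yields $\minpol(m_\ell)=\minpol(m_{\ell-1})$, hence $\minpol(m_{i-1})=\minpol(m_0)$ for all $i$, so it is enough to pick $m_0$ with $\minpol(m_0)\notin\{\minpol(h_1),\dots,\minpol(h_n)\}$.

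Next I would set up a dimension count. Let $\mathcal{R}\subset\DH$ be the set of rotation quaternions, i.e.\ dual quaternions $h$ with $h+\cj h\in\R$, $h\cj h\in\R$ and non-real primal part; it is an irreducible quasi-affine variety (an open subset of a quadric). For fixed $h_\ell$ the update $m_{\ell-1}\mapsto m_\ell$ is a component of the Bennett flip, hence a rational self-map $\sigma_\ell\colon\mathcal{R}\dashrightarrow\mathcal{R}$, and $\psi_{i-1}\coloneqq\sigma_{i-1}\circ\cdots\circ\sigma_1$ is a rational map $\mathcal{R}\dashrightarrow\mathcal{R}$ with $\psi_{i-1}(m_0)=m_{i-1}$ (and $\psi_0=\mathrm{id}$). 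Let $\mathcal{B}\subseteq\mathcal{R}$ be the union, over $i\in\{1,\dots,n\}$, of the indeterminacy locus of $\psi_{i-1}$, the preimage $\psi_{i-1}^{-1}(\{\cj{h_i}\})$, the preimage under $\psi_{i-1}$ of $\{m:m-\cj m\text{ linearly dependent on }h_i-\cj{h_i}\}$, and $\{m_0:\minpol(m_0)=\minpol(h_i)\}$. Each of these is Zariski-closed, so $\mathcal{B}$ is closed; since $\mathcal{R}$ is irreducible, $\mathcal{B}\neq\mathcal{R}$ as soon as every one of these finitely many sets is \emph{proper}, and then any $m_0\in\mathcal{R}\setminus\mathcal{B}$ has the desired property. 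The sets $\{m_0:\minpol(m_0)=\minpol(h_i)\}$ are proper because $m_0\mapsto\minpol(m_0)$ is non-constant on $\mathcal{R}$ (by \autoref{lem:2} its fibres have dimension $<\dim\mathcal{R}$), and $\{m:m-\cj m\text{ dependent on }v\}$ is proper for any fixed $v\neq0$ because, again by \autoref{lem:2}, not every rotation quaternion has vector part along a prescribed line.

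The crux -- and the step I expect to be the main obstacle -- is to show that the remaining pieces, the indeterminacy loci and the preimages under $\psi_{i-1}$, are proper; equivalently, that for each $i$ there is at least one $m_0\in\mathcal{R}$ for which \eqref{eq:6} is defined up to step $i$, $m_{i-1}\neq\cj{h_i}$, and $m_{i-1}-\cj{m_{i-1}}$ is not parallel to $h_i-\cj{h_i}$. I would derive this from the claim that each $\sigma_\ell$, and hence each $\psi_{i-1}$, is a \emph{dominant} rational self-map of the irreducible variety $\mathcal{R}$: granting the claim, $\psi_{i-1}(\mathcal{R})$ is dense in $\mathcal{R}$ and therefore cannot be contained in the proper closed subsets $\{\cj{h_i}\}$ or $\{m:m-\cj m\parallel h_i-\cj{h_i}\}$, which yields the required $m_0$. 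The claim in turn should follow from \autoref{prop:bflip}: since the restriction of $\bflip$ to pairs of rotation quaternions is a birational involution (items~2 and~3), for fixed $h_\ell$ the map $\sigma_\ell$ is, up to the finitely many choices of the companion quaternion $k_\ell$ coming from a one-sided quadratic quaternion equation, inverted by another Bennett flip; hence $\sigma_\ell$ has finite generic fibres and, being a self-map of the irreducible variety $\mathcal{R}$, is dominant. Putting everything together, $\mathcal{B}$ is a finite union of proper Zariski-closed subsets of the irreducible variety $\mathcal{R}$, so $\mathcal{R}\setminus\mathcal{B}$ is nonempty (indeed dense), and any $m_0$ in it satisfies the lemma.
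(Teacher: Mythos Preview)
Your approach coincides with the paper's: reduce via \autoref{lem:4} to avoiding (i) equal minimal polynomials and (ii) linearly dependent vector parts, settle (i) once and for all via \autoref{prop:bflip}.4, and handle (ii) by a Zariski--density argument resting on the good behaviour of the partial Bennett-flip maps $\sigma_\ell$. The paper phrases the last step only slightly differently---it works with the inverse maps $\beta_i\colon m_i\mapsto m_0$ and pushes the bad linear subspaces $L_i$ forward to the $m_0$-space---but this is the same idea.

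The one genuine gap is your justification that each $\sigma_\ell$ is dominant. The phrase ``up to the finitely many choices of the companion quaternion $k_\ell$ coming from a one-sided quadratic quaternion equation'' does not hold: over $\H$ (or $\DH$) an irreducible real quadratic such as $\minpol(h_\ell)$ has a two-sphere of zeros (\autoref{lem:2}), not finitely many, so you cannot recover $k_\ell$ that way; and even granting finite fibres, ``finite generic fibre'' alone does not force a rational self-map of an irreducible variety to be dominant (the generic fibre of a non-dominant map is empty, hence also finite). What actually holds is stronger and simpler: $\sigma_\ell$ is \emph{birational}. Comparing coefficients in $(t-h_\ell)(t-m_{\ell-1})=(t-m_\ell)(t-k_\ell)$ with $h_\ell,m_\ell$ given yields
\[
  (h_\ell - m_\ell)\,m_{\ell-1} \;=\; m_\ell\,(h_\ell - m_\ell),
\]
so $m_{\ell-1}=(h_\ell-m_\ell)^{-1}m_\ell(h_\ell-m_\ell)$ whenever $h_\ell-m_\ell$ is invertible---an explicit rational inverse. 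Equivalently, \autoref{prop:bflip}.5 applied to $(m_\ell,k_\ell)=\bflip(h_\ell,m_{\ell-1})$ gives $\bflip(\cj{m_\ell},h_\ell)=(k_\ell,\cj{m_{\ell-1}})$, again exhibiting $m_{\ell-1}$ rationally in~$m_\ell$. This is exactly how the paper proceeds: it asserts that the maps $\beta_i$ are birational by \autoref{prop:bflip} and then notes that birational images of proper subvarieties remain of positive codimension. Once you replace your dominance argument by this birationality, the rest of your proof is correct and matches the paper's.
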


\begin{proof}
  By \autoref{lem:4}, we have to avoid linear dependency of
  $m_{i-1} - \cj{m_{i-1}}$ and $h_i-\cj{h_i}$ as well as equality
  $\minpol(m_{i-1}) = \minpol(h_i)$ of minimal polynomials. By
  \autoref{prop:bflip} we have $\minpol(m_0) = \ldots =
  \minpol(m_n)$. Thus, in order to ensure different minimal
  polynomials, we just have to avoid the set
  \begin{equation*}
    \bigcup_{i=1}^n
    \{ m \in \DH \mid m + \cj{m} = h_i + \cj{h_i},\ m\cj{m} = h_i\cj{h_i} \}
  \end{equation*}
  when picking $m_0$. This is the union of at most $n$ algebraic sets
  of positive codimension.

  In order to avoid linearly dependent vector parts, we recursively
  define maps
  \begin{equation*}
    \beta_0\colon m_0 \mapsto m_0,\quad
    \beta_{i-1}^{-1} \circ \beta_i\colon m_i \mapsto m_{i-1} :\iff \bflip(m_{i-1},h_i) = (k_i,m_i)
  \end{equation*}
  for $i \in \{1,\ldots,n\}$. They are well-defined birational maps by
  \autoref{prop:bflip} and satisfy $\beta_i(m_i) = m_0$.

  The vectors $m_i-\cj{m_i}$ and $h_i-\cj{h_i}$ are linearly dependent
  if the projection of $m_i$ onto the vector part is in the span of
  the projection of $h_i$ onto its vector part. This defines a vector
  subspace $L_i$ of dimension three that $m_i$ should avoid. This is
  the case if and only if $m_0$ avoids the union
  $\bigcup_{i=1}^n \beta_i(L_i)$, that is, another finite union of
  varieties of positive codimension. This is certainly possible.
\end{proof}

This finishes our proof of \autoref{th:1}. We just want to explain how
to bound the number of links and joints. The minimal degree motion
polynomial with a given rational curve of degree $d$ and circularity
$c$ as trajectory is of degree $d - c$ and spherical degree defect
$s = d - 2c$ (\autoref{th:2}). By \autoref{th:3}, there exist a
polynomial $H \in \H[t]$ of degree not larger than
$\frac{1}{2}s = \frac{1}{2}d - c$ such that $CH$ admits a
factorization. The product $CH$ is of degree
$n = d - c + \frac{1}{2}d - c = \frac{3}{2}d - 2c$ and this is the
same number $n$ as in \autoref{fig:linkgraph}. The numbers of links
and joints are the numbers of vertices and edges, respectively, in the
linkgraph, that is, $2(n+1) = 3d - 4c + 2$ and
$3n+1 = \frac{9}{2}d - 6c + 1$, respectively.

We conclude this section with a discussion of planar and spherical
four-bar linkages as components of the scissor linkage. The summary is
that they can be avoided if wanted but can also be enforced if the
given rational curve is planar or spherical.

\begin{corollary}
  \label{cor:1}
  Under the assumptions of \autoref{lem:5} we may pick $m_0$ in such a
  way that none of the quadruples $(m_{i-1}, h_i, m_i, k_i)$ is planar
  or spherical. In particular, our construction yields linkages such
  that the configuration space is free of spurious
  components.
\end{corollary}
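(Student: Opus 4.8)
The plan is to refine the genericity argument of \autoref{lem:5} by a further avoidance argument. Recall from \autoref{lem:5} that the quadruple $(m_{i-1}, h_i, m_i, k_i)$ has one degree of freedom precisely when $m_{i-1} - \cj{m_{i-1}}$ and $h_i - \cj{h_i}$ are linearly independent and $\minpol(m_{i-1}) \neq \minpol(h_i)$. Under these conditions the resulting four-bar is either a genuine Bennett linkage, a planar anti-parallelogram, or a spherical linkage of the kind described after \autoref{rem:flip-mobility}. So I must identify, for each $i$, the locus of "bad" choices of $m_0$ for which the $i$-th four-bar degenerates to a planar or spherical linkage, show each such locus is contained in a proper algebraic subset (positive codimension) of the space of rotation quaternions, and then conclude by the same union-of-varieties argument already used in \autoref{lem:5}: the union over $i$ of these bad loci, pulled back via the birational maps $\beta_i$, is still a proper closed subset, so a good $m_0$ exists.

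First I would characterize algebraically when a Bennett flip produces a planar four-bar. A four-bar is planar exactly when all four axes $\axis{m_{i-1}}$, $\axis{h_i}$, $\axis{m_i}$, $\axis{k_i}$ are parallel, equivalently when the vector parts of $m_{i-1} - \cj{m_{i-1}}$, $h_i - \cj{h_i}$, $m_i - \cj{m_i}$, $k_i - \cj{k_i}$ are all parallel — but since we have already arranged the vector parts of $m_{i-1}$ and $h_i$ to be linearly independent, planarity of the four-bar is automatically excluded by the conditions of \autoref{lem:5}. Wait — more carefully, planar four-bars have parallel axes, which forces $m_{i-1} - \cj{m_{i-1}}$ and $h_i - \cj{h_i}$ to be linearly dependent, contradicting \autoref{lem:5}. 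Hence the only genuinely new case to rule out is the \emph{spherical} one: a spherical four-bar has concurrent axes, i.e. the four Plücker lines $\axis{m_{i-1}}, \axis{h_i}, \axis{m_i}, \axis{k_i}$ all pass through a common point. I would translate this into a dual-quaternion condition: $\axis{h}$ passes through the origin iff the dual part of $h - \cj{h}$ vanishes, and more generally $\axis{h}$ passes through a fixed point $z$ iff a suitable linear (in $h$) condition holds. So the $i$-th four-bar is spherical iff there is a common $z$ with all four axes through $z$; eliminating $z$ gives an algebraic condition on $(m_{i-1}, h_i)$. The key point is that this condition, for fixed $h_i$, cuts out a proper subvariety of the $m_{i-1}$-space — it cannot be all of it, since for a suitably generic choice of $m_{i-1}$ the axis $\axis{m_{i-1}}$ misses the (at most one-dimensional) family of lines through points of $\axis{h_i}$ that could close up into a spherical linkage. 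I would make this precise by exhibiting one explicit $m_{i-1}$ (e.g. with $\axis{m_{i-1}}$ skew to $\axis{h_i}$ and not meeting it) for which the four-bar is the genuinely spatial Bennett linkage, which certifies that the spherical locus is proper.

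Having established that for each $i$ the set of $m_{i-1}$ producing a planar or spherical $i$-th four-bar is contained in a proper closed subvariety $S_i$ of the $m_{i-1}$-space, I would pull $S_i$ back along the birational map $\beta_{i-1}$ (defined in the proof of \autoref{lem:5}, sending $m_{i-1} \mapsto m_0$) to obtain a proper closed subset $\beta_{i-1}(S_i)$ of the $m_0$-space. Adjoining these to the finitely many avoidance sets from \autoref{lem:5}, we still have a finite union of proper closed subvarieties; since the rotation-quaternion space is irreducible (it is parameterized by $\R^3$ times an open set of dual parts, or concretely an open dense subset of $\R^6$), a choice of $m_0$ outside this union exists. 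For such $m_0$ every loop $h_i, m_i, k_i, m_{i-1}$ is a genuine Bennett linkage, whose configuration curve is irreducible by the discussion in \autoref{sec:overview}; concatenating these loops as in \autoref{fig:linkgraph} yields a linkage whose configuration curve is a fibered product of irreducible curves over the common parameter line, hence has a single distinguished component and no spurious components. The final "in particular" claim then follows.

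The main obstacle I anticipate is the spherical case: proving that the locus of $m_{i-1}$ giving a spherical four-bar is genuinely proper, rather than accidentally all of the parameter space. This requires a transversality-type observation — the axes $\axis{m_i}$ and $\axis{k_i}$ depend rationally on $(m_{i-1}, h_i)$ via the Bennett flip, and one must verify that generically the four axes fail to be concurrent. I expect the cleanest route is dimension counting: concurrent four-tuples of lines form a subvariety of strictly smaller dimension in the relevant configuration space of four lines realizable by a movable four-bar (which, by the Bennett condition, is already cut out to the "equal opposite distances and angles" locus), so a generic Bennett flip does not land in it; exhibiting one explicit non-concurrent example with the required $h_i$ suffices to certify properness. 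Everything else is a routine repetition of the avoidance bookkeeping in \autoref{lem:5}.
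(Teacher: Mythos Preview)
Your overall strategy---identify the bad loci as proper subvarieties and pull them back along the birational maps $\beta_i$---is exactly the paper's approach. But your dismissal of the planar case contains a genuine error. You assert that parallel axes of $m_{i-1}$ and $h_i$ force $m_{i-1}-\cj{m_{i-1}}$ and $h_i-\cj{h_i}$ to be linearly dependent, which would already be excluded by \autoref{lem:5}. That is false: by the first item of \autoref{lem:4}, linear dependence of these dual quaternions means the axes \emph{coincide}, not merely that they are parallel. Two parallel but distinct axes have proportional primal direction vectors yet linearly independent Pl\"ucker vectors, so \autoref{lem:5} does not rule them out. The planar case therefore requires its own avoidance argument, symmetric to what you sketch for the spherical case.

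The paper's proof treats both cases simultaneously with a much simpler criterion: the four-bar is planar (respectively spherical) if and only if the two input axes $\axis{m_{i-1}}$ and $\axis{h_i}$ are parallel (respectively intersect). The point is that a Bennett flip of two rotations with parallel or concurrent axes necessarily yields two further rotations with the same property, so concurrency or parallelism of all four axes reduces to a condition on the two input axes alone. For fixed $h_i$ each of these is manifestly a proper algebraic condition on $m_{i-1}$, and the rest is the same pull-back bookkeeping as in \autoref{lem:5}. Your spherical argument would eventually work, but the two-axis criterion avoids the need to reason about all four axes, eliminate a common point $z$, or exhibit an explicit non-degenerate witness.
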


\begin{proof}
  The quadruple will be spherical or planar, respectively, if and only
  if the axes of $m_{i-1}$ and $h_i$ intersect or are
  parallel. Similar as in the proof of \autoref{lem:5}, this defines
  subvariaties of positive codimension that $m_{i-1}$ should avoid for
  $i \in \{2,\ldots,n\}$. Mapping back this subvariaties via the maps
  $\beta_1,\ldots,\beta_n$ adds finitely many further components of
  positive codimension to the set that $m_0$ should avoid.
\end{proof}

\autoref{cor:1} shows that we can avoid planar or spherical
quadrilaterals in the scissor linkage if we wish. This ensures that
the configuration space of the resulting linkage is free from spurious
components. This is an important difference to planar versions of
Kempe's Universality Theorem where ``bracing constructions'' for
anti-parallelograms are necessary in order to suitable constrain the
configuration space. Of course, this then yields spatial linkages even
for planar or spherical curves/motions which may not always be desirable. The
next corollary states that it is also possible to generate planar or
spherical linkages if the input data is suitable.

\begin{corollary}
  \label{cor:2}
  If, under all assumptions of \autoref{lem:5}, the axes of
  $h_1,\ldots,h_n$ are incident with the same point or are
  parallel to the same direction, we may pick $m_0$ in such a way
  that all axes of $m_1,\ldots,m_n$ and $k_1,\ldots,k_n$ are incident
  with the this point or parallel to this direction,
  respectively.
\end{corollary}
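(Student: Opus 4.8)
The plan is to exploit the fact that the Bennett flip (\autoref{def:bflip}) is assembled purely from sums, products, conjugates and a single inverse of its two arguments, and that both spherical and planar displacements form linear subspaces of $\DH$ that are in fact subalgebras, closed under each of these operations. Concretely: if the axes $\axis{h_1},\dots,\axis{h_n}$ all pass through one point I would translate that point to the affine origin, so that every $h_i$ lies in $\mathcal{A} \coloneqq \H \subset \DH$ (the dual quaternions with vanishing dual part); if the axes are all parallel I would rotate so that their common direction becomes the $\qk$-axis, so that every $h_i$ lies in $\mathcal{A} \coloneqq \langle 1,\ \qk,\ \eps\qi,\ \eps\qj\rangle \subset \DH$ (the planar dual quaternions for that direction). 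In either case a short direct computation (done for the planar case in \cite{gallet15}) shows that $\mathcal{A}$ is closed under multiplication and conjugation, and that the norm $\Norm{h}$ of any $h \in \mathcal{A}$ is real or dual, hence central, so $\mathcal{A}$ is closed under inversion of its invertible elements as well.

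Because the formulas $k_2 = -(\cj{h_1}-h_2)^{-1}(h_1h_2-h_1\cj{h_1})$ and $k_1 = h_1+h_2-k_2$ of \autoref{def:bflip} use only these operations, $h_1,h_2 \in \mathcal{A}$ with $\cj{h_1}\neq h_2$ forces $\bflip(h_1,h_2) \in \mathcal{A}^2$. Feeding this inductively into the recursion \eqref{eq:6}, any choice $m_0 \in \mathcal{A}$ yields $m_1,\dots,m_n,k_1,\dots,k_n \in \mathcal{A}$; equivalently, all axes $\axis{m_i}$ and $\axis{k_i}$ pass through the chosen point, respectively run parallel to the chosen direction. This already gives the geometric content of the corollary, and it only remains to see that $m_0$ can be kept inside $\mathcal{A}$ while still satisfying the hypotheses, and hence the conclusion, of \autoref{lem:5}.

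For that last point I would simply rerun the proof of \autoref{lem:5} with $m_0$ restricted to the four-dimensional linear space $\mathcal{A}$ rather than all of $\DH$. Since $\bflip$ preserves $\mathcal{A}$, the birational maps $\beta_0,\dots,\beta_n$ appearing there restrict to birational self-maps of $\mathcal{A}$, and the loci $m_0$ must avoid --- pairs with linearly dependent vector parts, pairs with $\minpol(m_{i-1})=\minpol(h_i)$, and the finitely many points where a flip in \eqref{eq:6} degenerates --- pull back along the $\beta_i$ to a finite union of subvarieties of $\mathcal{A}$. The step that I expect to require the most care, and the only genuine obstacle, is verifying that each of these subvarieties is a \emph{proper} subset of $\mathcal{A}$, i.e.\ that constraining $m_0$ to the subalgebra does not accidentally force it into a forbidden locus. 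For $\mathcal{A}=\H$ this is transparent, since the axes through the origin sweep out a two-sphere of directions and the rotation angle is a further free parameter; for the planar subalgebra one uses that $\mathcal{A}$ still carries the freedom of an in-plane rotation angle and an in-plane position, so that $\minpol(m_0)$ and $\axis{m_0}$ can both be made to differ from $\minpol(h_i)$ and $\axis{h_i}$. Granting this, a generic $m_0 \in \mathcal{A}$ avoids the finite union of forbidden loci, and the corollary follows.
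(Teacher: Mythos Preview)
Your proposal is correct and follows the same idea as the paper: choose $m_0$ in the relevant four-dimensional subalgebra ($\H$ for concurrent axes, $\langle 1,\qk,\eps\qi,\eps\qj\rangle$ for parallel axes) and use that the Bennett flip, being built from ring operations and conjugation, preserves that subalgebra, then rerun \autoref{lem:5} inside it. The paper's own proof is terser still---it cites \cite[Lemma~6.5]{gallet15} for the planar case and dispatches the spherical case in one sentence---so your unified subalgebra treatment is in fact more self-contained than the original.
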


\begin{proof}
  For the planar case, this is \cite[Lemma~6.5]{gallet15}. As to the
  spherical case, assume, without loss of generality that all dual
  parts are zero. The proof of \autoref{lem:5} shows that we can pick
  $m_0 \in \H$ which then implies $m_1,\ldots,m_n \in \H$ and the
  claim follows.
\end{proof}

Finally, we mention two ways to improve our bounds on the number of
links and joints. The first is just a hint that will often work for
spatial linkages without a formal proof. The second is a substantial
improvement of Kempe's Universality Theorem for rational spherical
curves.

\begin{remark}
  If two successive four-bars $\axis{h_i}$, $\axis{m_i}$, $\axis{k_i}$,
  $\axis{m_{i-1}}$ and $\axis{h_{i+1}}$, $\axis{m_{i+1}}$, $\axis{k_{i+1}}$,
  $\axis{m_i}$ in a scissor linkage are both Bennett linkages, it is, in
  general, possible to eliminate their common joint $m_i$ without increasing the
  mobility. The two Bennett linkages are replaced by a closed-loop linkage with
  six revolute joints and one degree of freedom. (It is of type ``Waldron's
  double Bennett hybrid'' \cite[pp.~63--65]{dietmaier95}.) We refrain from any
  attempts to formalizing this idea because it is a little tricky to guarantee
  that neither the number of irreducible components nor dimension of the
  configuration space increases in non-generic situations.
\end{remark}

\begin{corollary}
  \label{cor:spherical}
  A spherical rational curve of degree $d$ appears as trajectory of
  linkage with at most $d+2$ links and $\frac{3}{2}d+1$ joints.
\end{corollary}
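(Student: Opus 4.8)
The plan is to trace through the general construction of \autoref{th:1}, specialized to the case of a spherical rational curve, and to show that the spherical structure cuts the required degree of the factorizable motion polynomial in half. First I would invoke \autoref{th:2}: a spherical rational curve of degree $d$ has circularity $c = d/2$, because a spherical curve meets the absolute conic in a number of points equal to half its degree (it lies on a sphere, which meets the plane at infinity in the absolute conic, so every intersection of the curve with that plane is an intersection with the absolute conic). Consequently the minimal motion polynomial $C = P + \eps Q$ has quaternion degree $d - c = d/2$ and spherical degree defect $s = d - 2c = 0$. In other words, the minimal motion of a spherical curve is \emph{generic}: $\mrpf(P) = 1$.

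Next, because $\mrpf(P) = 1$, \autoref{th:3} (or directly \autoref{alg:gfactor}) applies with $H = 1$: the minimal motion polynomial $C$ itself admits a factorization $C = (t - h_1)\cdots(t - h_n)$ into $n = d/2$ bounded linear motion polynomials, with no degree elevation needed. By the remark after \autoref{lem:1}, the minimal motion of a spherical curve is itself spherical, so the $h_i$ may (and, for the spherical linkage, should) be taken with zero dual part. I would then run the scissor-linkage construction of \autoref{sec:scissor-linkage}: pick a suitable $m_0$ by \autoref{lem:5}, form the Bennett flips $(m_\ell, k_\ell) = \bflip(h_\ell, m_{\ell-1})$, and assemble the linkage whose link graph is \autoref{fig:linkgraph}. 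By \autoref{cor:2} (spherical case) we may in fact pick $m_0 \in \H$ so that all axes remain concurrent, yielding a genuinely spherical linkage. The number of links is the number of vertices of the link graph, $2(n+1) = 2(d/2 + 1) = d + 2$, and the number of joints is the number of edges, $3n + 1 = 3d/2 + 1 = \tfrac{3}{2}d + 1$, as claimed.

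The one point that requires a little care — and which I would flag as the main obstacle — is the degree bookkeeping: one must be sure that for a spherical curve the factorizable motion polynomial really has degree only $n = d/2$, rather than the $n = \tfrac{3}{2}d - 2c$ that the general argument at the end of \autoref{sec:scissor-linkage} produces. The point is that that general bound uses the worst-case estimate $\deg H \le \tfrac{1}{2}s$ from \autoref{th:3}; when $s = 0$ this contribution vanishes, and $n = d - c = d/2$. So the improvement is not a new construction but simply the observation that spherical curves are the extreme case $c = d/2$ of maximal circularity, where both the degree-elevation term and the difference between trajectory degree and quaternion degree collapse. Substituting $c = d/2$ into the bounds $3d - 4c + 2$ and $\tfrac{9}{2}d - 6c + 1$ of \autoref{th:1} gives $d + 2$ and $\tfrac{3}{2}d + 1$ directly, which is the cleanest way to finish; I would present it that way, with the circularity computation $c = d/2$ as the only genuinely new ingredient.
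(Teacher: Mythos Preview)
Your proposal is correct and follows essentially the same approach as the paper: the paper's proof consists of the single observation that spherical rational curves have maximal circularity $c = \tfrac{1}{2}d$, and then substitutes into the bounds of \autoref{th:1}. You arrive at the same substitution, but additionally supply a justification for $c = \tfrac{1}{2}d$ and unpack where the savings occur in the construction (the spherical degree defect vanishes, so no degree elevation is needed); this extra detail is sound and more informative than the paper's one-line argument.
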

\begin{proof}
  The statement follows from the bounds of \autoref{th:1} and the
  observation that spherical rational curves are of maximal
  circularity~$c = \frac{1}{2}d$.
\end{proof}

\section{Examples}
\label{sec:examples}

In this section we present several examples of linkages with given
rational curves. They are meant to illustrate type and properties of
the linkages we obtain, the degrees of freedom we have in our
construction, and how to use them in the design of linkages. Some of
them continue previous examples.

\begin{example}
  \label{ex:pascal}
  The limacon of Pascal, parameterized by $x = x_0 + x_1\qi + x_2\qj$ where
  \begin{equation*}
    x_0 = (1+t^2)^2,\quad
    x_1 = 2t(a-b-(a+b)t^2),\quad
    x_2 = (4a+2b)t^2+2b,
  \end{equation*}
  has the minimal motion
  \begin{equation*}
    C = (t - \qk + \tfrac{1}{2}\eps\qi(a+2b))(t - \qk + \tfrac{1}{2}\eps\qi a).
  \end{equation*}
  Because of $C\cj{C} = (1+t^2)^2$, this is the only factorization. We
  select $m_0 = 2\qk$ and, using the recursion \eqref{eq:6}, obtain
  \begin{gather*}
     3m_1 = 6\qk - 2\eps\qi(a+2b),\quad
     9m_2 = 18\qk - 4\eps\qi(2a+b),\\
     6k_1 = 6\qk + \eps\qi(a+2b),\quad
    18k_2 = 18\qk - \eps\qi(5a+16b).
  \end{gather*}
  The resulting linkage for $a = b = 1$ is depicted in
  \autoref{fig:pascal}, left and bottom. In this case, the curve is a
  cardioid. The joints are labeled by their corresponding rotation
  quaternions. The special geometry allows to realize the links
  connecting $h_1$, $h_2$, $m_1$ and $k_1$, $k_2$, $m_1$,
  respectively, by straight line segments. The linkage has two flat
  positions where all joints are collinear. There, either of the two
  anti-parallelograms ($h_i$, $m_i$, $k_i$, $m_{i-1}$ for
  $i \in \{1,2\}$) can switch to parallelogram mode. Thus, the motion
  has four components and only one of them is relevant for drawing the
  cardioid. Using $m_0 = 2\qk + 2\eps\qj$, we obtain a different, less
  symmetric, linkage to draw the same curve (\autoref{fig:pascal},
  right). Here, the triangles $h_1$, $m_1$, $h_2$ and $k_1$, $m_1$,
  $k_2$ act as links, that is, they are rigid throughout the
  motion. The motion of the link attached to $h_2$ and $m_2$ is the
  same for both linkages.
\end{example}

\begin{figure}
  \centering
  \begin{minipage}[t]{0.5\linewidth}
    \centering
    \rule{\linewidth}{0pt}
    \includegraphics{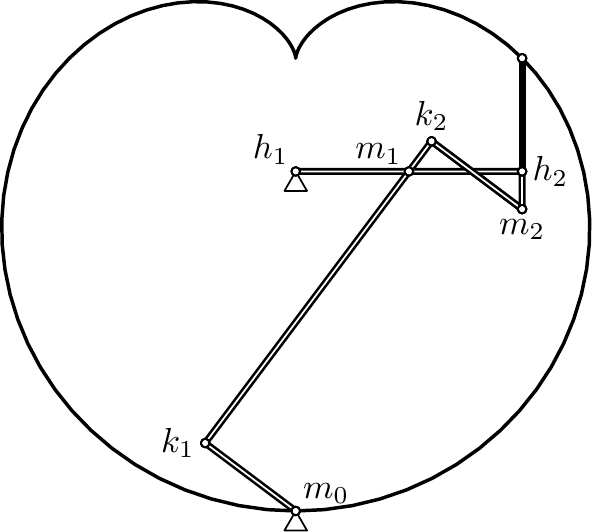}
  \end{minipage}
  \begin{minipage}[t]{0.5\linewidth}
    \centering
    \rule{\linewidth}{0pt}
    \includegraphics{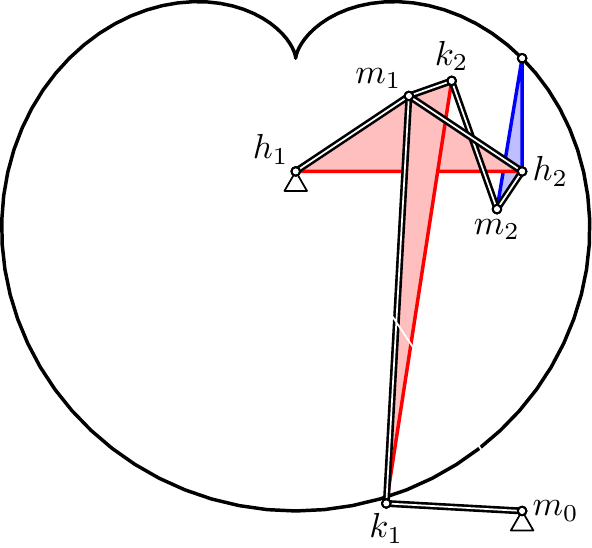}
  \end{minipage}
  \includegraphics{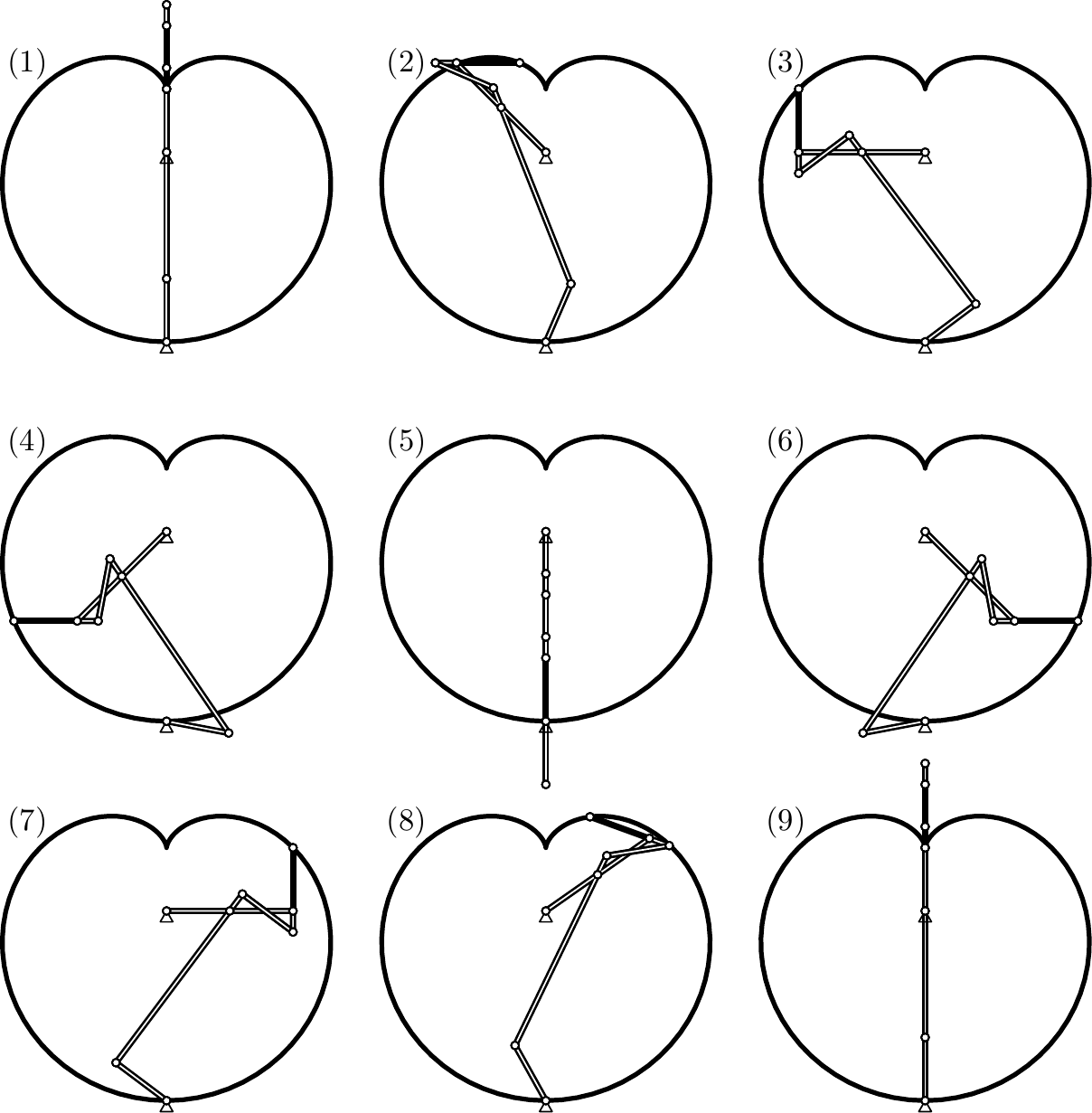}
  \caption{Linkages to draw a cardiod}
  \label{fig:pascal}
\end{figure}

\begin{example}
  We continue the discussion of Viviani's curve. However, we will not
  use \eqref{eq:3} but the simpler motion polynomial
  $C = (t-\qk)(t-\qj)$ which is obtained from \eqref{eq:3} by a
  translation. Based on this minimal motion we construct a linkage. We
  initialize the construction with $m_0 = \tfrac{1}{2}\qj$ and, using
  the recursion \eqref{eq:6}, compute
  \begin{equation*}
    10m_1 = -3\qj + 4\qk,\quad
    26m_2 = 5\qj - 12\qk,\quad
    5k_1 = 4\qj + 3\qk,\quad
    65k_2 = 33\qj+56\qk.
  \end{equation*}
  It is already apparent that this linkage has a flat folded position
  in the plane spanned by $\qj$ and $\qk$. One of its configurations
  is depicted in \autoref{fig:viviani}, left. We may as well construct a
  different linkage by starting with $m_0 = \tfrac{1}{2}\qi$ whence we get
  \begin{equation*}
    10m_1 = -3\qi + 4\qk,\
    50m_2 = 9\qi + 20\qj - 12\qk,\
    5k_1 = 4\qi + 3\qk,\
    25k_2 = -12\qi + 15\qj + 16\qk.
  \end{equation*}
  The corresponding linkage is shown in \autoref{fig:viviani},
  right. Here, the joint triples $(h_1, h_2, m_1)$ and
  $(k_1, k_2, m_1)$ are not collinear. Note that maybe more natural
  choices like $m_0 = \qi$ or $m_0 = \qj$ would violate the condition
  $\minpol(m_0) \neq \minpol(h_1)$.
\end{example}

\begin{figure}
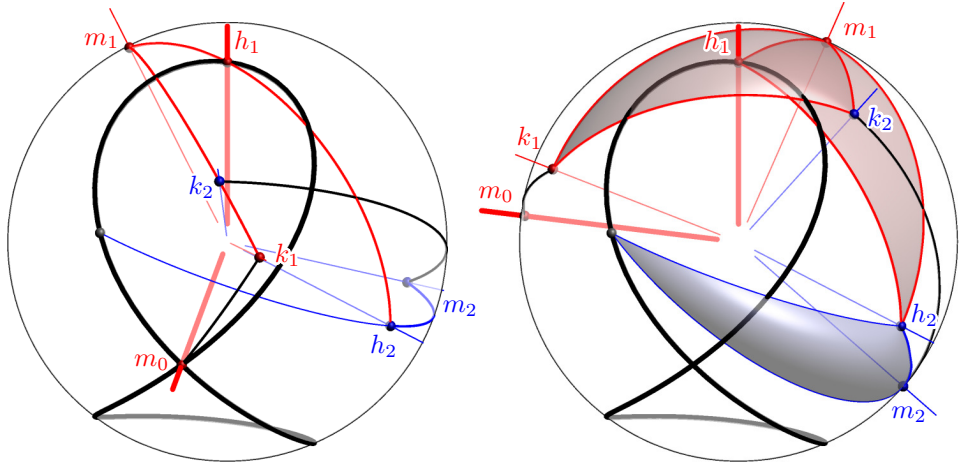

  \centering
  \begin{overpic}{img/viviani3b}
    \contourlength{1pt}
    \put(50,91){\contour{white}{\textcolor{red}{$h_1$}}}
    \put(17,92){\textcolor{red}{$m_1$}}
    \put(59,44){\contour{white}{\textcolor{red}{$k_1$}}}
    \put(28,21){\contour{white}{\textcolor{red}{$m_0$}}}
    \put(80,24){\contour{white}{\textcolor{blue}{$h_2$}}}
    \put(40,59){\contour{white}{\textcolor{blue}{$k_2$}}}
    \put(96,33){\textcolor{blue}{$m_2$}}
  \end{overpic}
  \hfill
  \begin{overpic}{img/viviani4b}
    \contourlength{1pt}
    \put(47,86){\contour{white}{\textcolor{red}{$h_1$}}}
    \put(76,89){\textcolor{red}{$m_1$}}
    \put(8,66){\textcolor{red}{$k_1$}}
    \put(0,55){\textcolor{red}{$m_0$}}
    \put(90,29){\contour{white}{\textcolor{blue}{$h_2$}}}
    \put(81,70){\contour{white}{\textcolor{blue}{$k_2$}}}
    \put(86,9){\textcolor{blue}{$m_2$}}
  \end{overpic}
  \caption{Spherical linkages to draw Viviani's curve}
  \label{fig:viviani}
\end{figure}

\begin{example}
  In our next example, we return to the elliptic translation of
  \autoref{ex:5} and start with the planar factorization
  \eqref{eq:5}. We assume $a \neq 1$, choose $m_0 = -a\qk - b\eps\qj$
  and compute
  \begin{equation}
    \label{eq:7}
    \begin{gathered}
      (1+a)m_1 = -a(a+1)\qk - b\eps\qj(a - 1),\quad
      (1-a)m_2 = a(a-1)\qk - (a^2-2ab+b)\eps\qj,\\
      (1-a)^2m_3 = -a(a-1)^2\qk - (3a^2b-2a^2-b)\eps\qj,\quad
      (1+a)k_1 = -(a+1)\qk - 2b\eps\qj,\\
      2(1-a^2)k_2 = -2(a^2-1)\qk + (a^3-a^2(b-2)-a(6b-a)+3b)\eps\qj,\\
      2(1-a)^2k_3 = 2(a-1)^2\qk + (a^3+a^2(b-4)+a(8b-1)-5b)\eps\qj.
    \end{gathered}
  \end{equation}
  The centers of the rotation quaternions in \eqref{eq:7} describe the
  linkage in the configuration at $t = \infty$ and we can verify that
  all joints lie on the first coordinate axis. The linkage is quite
  similar to the linkage of \autoref{ex:pascal} but requires eight
  links and ten joints and it is similar to the example of
  \cite{gallet15} which was constructed in a similar manner. The
  linkage has two flat positions at which any of the three
  anti-parallelograms $m_{i-1}$, $k_i$, $m_i$, $h_i$ may switch to
  parallelogram mode. Thus, the configuration curve has six components
  and only one is relevant for drawing the ellipse.

  \begin{figure}
    \centering
    \includegraphics{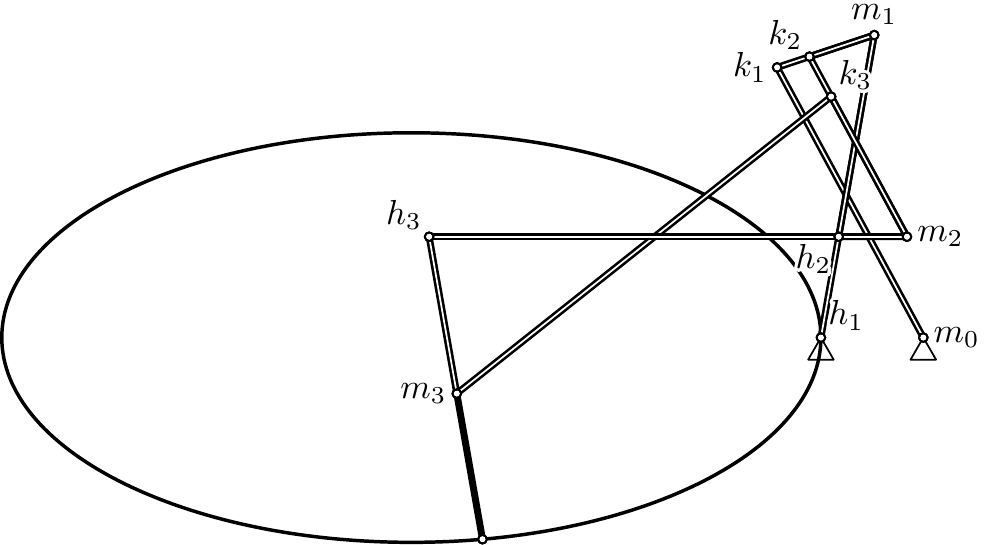}
    \includegraphics{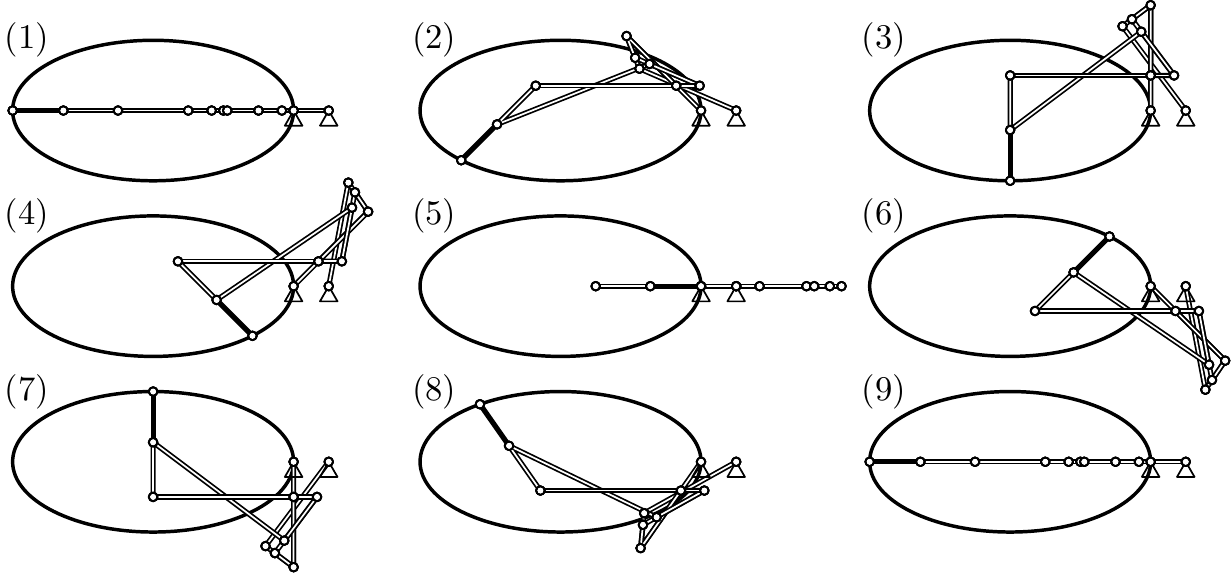}
    \caption{Linkage to draw an ellipse}
    \label{fig:elliptic-translation2}
  \end{figure}

  We may also construct a spatial linkage based on the factorization
  \eqref{eq:4} and the rotation quaternion $m_0 = 1 + \qj + \eps\qk$.
  We refrain from displaying the values of $m_1$, $m_2$, $m_3$, $k_1$,
  $k_2$, and $k_3$ (which would still be possible, even without
  specifying $a$ and $b$) and rather discuss the resulting spatial
  linkage (\autoref{fig:elliptic-translation}). It shares the
  linkgraph with the previous example and can be thought of as a
  scissor linkage made of Bennett linkages.
  \autoref{fig:elliptic-translation} displays a schematic
  representation where the four links and joints of one Bennett
  linkage are displayed in the same color (blue, gray, and red,
  respectively; note that the joints $m_1$ and $m_2$ belong to two
  Bennett linkages). The axis triples $(h_1,m_1,h_2)$,
  $(k_1,m_1,k_2)$, $(h_2,m_2,h_3)$, and $(k_2,m_2,k_3)$ are rigidly
  connected and one point attached to $h_3$ and $m_3$ draws the
  ellipse. In contrast to all other examples so far, the configuration
  curve of this linkage is irreducible. One can show that not only one
  but all trajectories are ellipses (or line segments) but in
  non-parallel plane. Thus, the linkage generates a so-called Darboux
  motion \cite{li15c}. It is possible to remove $m_1$ or $m_2$ without
  increasing the dimension of the configuration curve but this may
  come at the cost of introducing spurious components of the
  configuration curve.
\end{example}

\begin{figure}
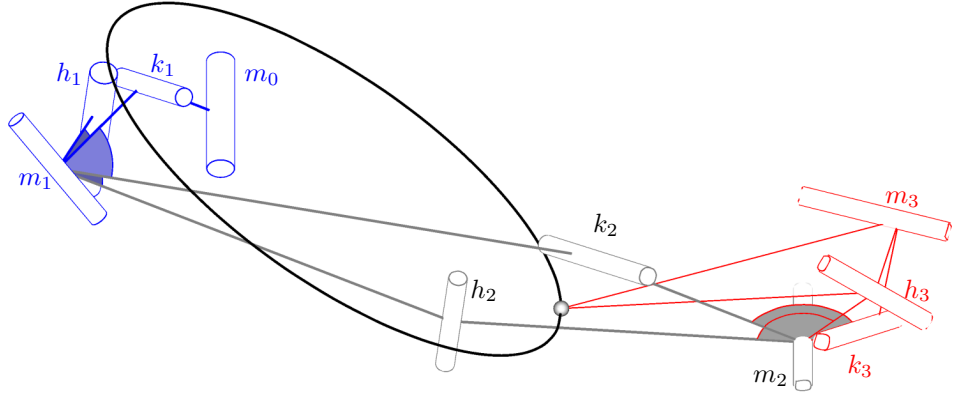

  \centering
  \begin{overpic}{img/elliptic-translation-b}
    \put(5,33){\textcolor{blue}{$h_1$}}
    \put(25,33){\textcolor{blue}{$m_0$}}
    \put(15,34){\textcolor{blue}{$k_1$}}
    \put(1,22){\textcolor{blue}{$m_1$}}
    \put(49,10){\textcolor{black}{$h_2$}}
    \put(62,17){\textcolor{black}{$k_2$}}
    \put(79,1){\textcolor{black}{$m_2$}}
    \put(89,2){\textcolor{red}{$k_3$}}
    \put(95,10){\textcolor{red}{$h_3$}}
    \put(93,20){\textcolor{red}{$m_3$}}
  \end{overpic}
  \caption{Spatial linkage that can draw an ellipse}
  \label{fig:elliptic-translation}
\end{figure}

\begin{example}
  Finally, we illustrate how to draw a bounded portion of an unbounded
  rational curve. Using \eqref{eq:7} with $b = 0$ is prevented because
  then the choice $m_0 = -a\qk - b\eps\qj$ is invalid. But using the
  curve $x = t^2+1-2\qi$, the minimal motion $C = t^2 + 1 + \eps\qi$,
  the factor, $H = t - \qk$ and the rotation quaternion
  $m_0 = 2\qk + \frac{3}{4}\eps\qj$ produces the linkage shown in
  \autoref{fig:straight-line}.
\end{example}

\begin{figure}
  \centering
  \includegraphics[trim=5 70 60 0,clip,page=13]{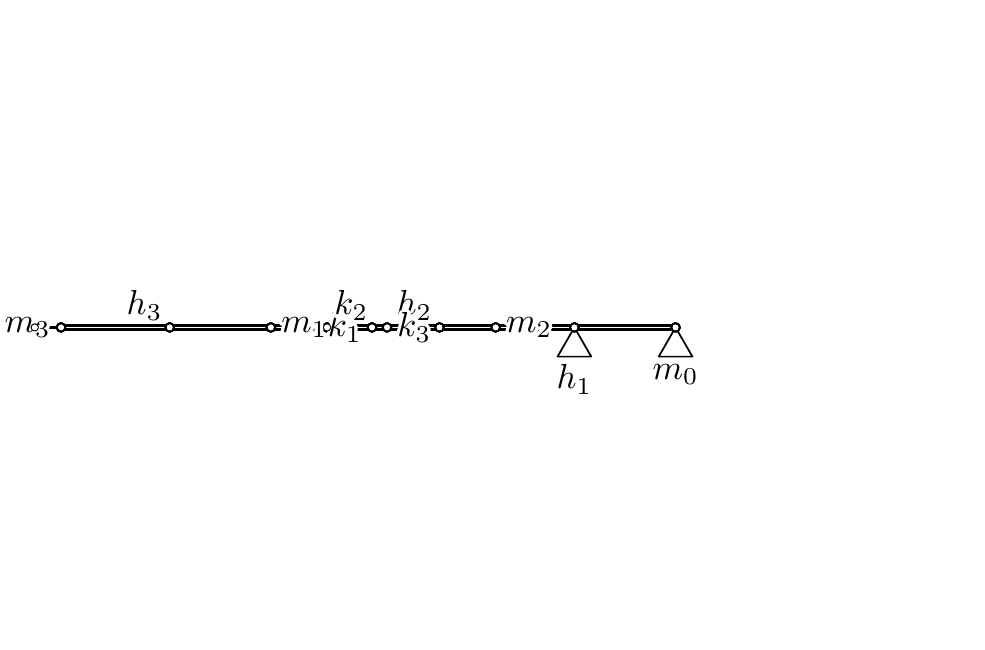}
  \caption{Linkage to draw a straight-line segment}
  \label{fig:straight-line}
\end{figure}

\section{Discussion of Results and Future Research}
\label{sec:discussion}

In order to carry out the computations for this article, we wrote an
experimental implementation in the computer algebra system Maple (version
18). Polynomial arithmetic, in particular the computation of $\gcd$-s, is
already available in Maple. Factorization over the real algebraic closure is not
available, but we worked around that by figuring out, in each example, which
field extension is needed. Maple can do the factorization in a given number
field; in most of our examples, especially in all examples occurring in
Section~\ref{sec:examples}, we do only examples with all factors defined
over~$\Q$.

For applications in engineering, it would be better to have an algorithm that
works with floating point numbers with a fixed precision. Exact
$\gcd$-computation is then not possible, but approximate versions of $\gcd$-s do
exist, for instance \cite{qrgcd04}. In order to adapt the algorithms in this
paper to the approximate setup, one needs to analyze the consequences of errors
carefully. We intend to do this in the future.

It is natural to ask whether our construction allows extensions to other joint
types. In principle, one could use prismatic (translation) joints to draw
unbounded rational curves. An extension of our algorithms to this case might be
possible, but we do not yet really know how (so this is another topic for future
research). The main problem is the failure of \autoref{th:3} for certain
unbounded polynomials. One such example is $t^2-\eps \qi.$ We could not find a
factorization into linear motion polynomials even after multiplication with a
quaternion polynomial.

By suitably selecting the dual quaternion $m_0$ in the construction of the
scissor linkage (\autoref{sec:scissor-linkage}) it is possible to create
intersecting revolute axes (``spherical joints'') in \emph{some} of the involved
four-bar linkages. In case of rational motions of degree two, we may produce a
linkage composed of two spherical four-bar linkages that has been called
``spherically constrained spatial revolute-revolute chain'' in the recent paper
\cite{abdul-sater16}. Our approach via Bennett flips is different from that
paper but may be used to improve certain aspects in the design process of the
car door guiding linkage that was presented there. This is actually the topic of
an ongoing research and demonstrates that ideas we presented in this paper may
be of engineering relevance.

\section*{Acknowledgments}
\label{sec:acknowledgements}

This work was supported by the Austrian Science Fund (FWF): P~26607
(Algebraic Methods in Kinematics: Motion Factorisation and Bond
Theory).

\bibliographystyle{amsplain}
\providecommand{\bysame}{\leavevmode\hbox to3em{\hrulefill}\thinspace}
\providecommand{\MR}{\relax\ifhmode\unskip\space\fi MR }
\providecommand{\MRhref}[2]{
  \href{http://www.ams.org/mathscinet-getitem?mr=#1}{#2}
}
\providecommand{\href}[2]{#2}

\end{document}